\def\section{\@startsection {section}{1}{\z@}{-3.5ex plus -1ex minus -.2ex}{2.3 ex plus .2ex}{\large\sc}}
\def\subsection{\@startsection {subsection}{1}{\z@}{-3.5ex plus -1ex minus -.2ex}{2.3 ex plus .2ex}{\normalsize}}
\theoremstyle{definition}
\newtheorem{thm}{Theorem}
\newtheorem{lem}{Lemma}
\newtheorem{asm}{Assumption}
\newcommand{\argsup}{\mathop{\rm argsup}}
\newcommand{\dlim}{\mathop{\rm dlim}}
\renewcommand{\baselinestretch}{1.2}\selectfont
\def\T{{\rm T}}
\def\E{{\rm E}}
\def\V{{\rm V}}
\def\P{{\rm P}}
\def\o{{\rm o}}
\def\O{{\rm O}}
\def\oP{{\rm o}_{\rm P}}
\def\OP{{\rm O}_{\rm P}}
\def\N{{\rm N}}
\def\U{{\rm U}}
\def\~{\hspace{-1mm}}
\title{\large\bf Akaike information criterion for segmented regression models}
\author{\normalsize Kazuki Nakajima\\\small Department of Statistical Science, The Graduate University for Advanced Studies\bigskip\\\normalsize Yoshiyuki Ninomiya\\\small Department of Fundamental Statistical Mathematics, The Institute of Statistical Mathematics\\\small Department of Statistical Science, The Graduate University for Advanced Studies}
\date{}
\begin{document}

\maketitle

\begin{abstract}
In segmented regression, when the regression function is continuous at the change-points that are the boundaries of the segments, it is also called joinpoint regression, and the analysis package developed by \cite{KimFFM00} has become a standard tool for analyzing trends in longitudinal data in the field of epidemiology.
In addition, it is sometimes natural to expect the regression function to be discontinuous at the change-points, and in the field of epidemiology, this model is used in \cite{JiaZS22}, which is considered important due to the analysis of COVID-19 data.
On the other hand, model selection is also indispensable in segmented regression, including the estimation of the number of change-points; however, it can be said that only BIC-type information criteria have been developed.
In this paper, we derive an information criterion based on the original definition of AIC, aiming to minimize the divergence between the true structure and the estimated structure.
Then, using the statistical asymptotic theory specific to the segmented regression, we confirm that the penalty for the change-point parameter is 6 in the discontinuous case.
On the other hand, in the continuous case, we show that the penalty for the change-point parameter remains 2 despite the rapid change in the derivative coefficients.
Through numerical experiments, we observe that our AIC tends to reduce the divergence compared to BIC.
In addition, through analyzing the same real data as in \cite{JiaZS22}, we find that the selection between continuous and discontinuous using our AIC yields new insights and that our AIC and BIC may yield different results.
\vskip\baselineskip
\noindent\textbf{Keywords} AIC, Brownian motion, change-point analysis, COVID-19 data, joinpoint regression, model selection, piecewise regression, random walk, statistical asymptotic theory
\vskip\baselineskip
\noindent\textbf{AMS 2000 Subject Classifications} Primary: 62J02, Secondary: 62F12.

\end{abstract}

\section{Introduction}\label{sec1}

Segmented regression, also known as piecewise regression, can be regarded as a type of change-point analysis and has been used extensively for a long time.
When the regression function is continuous at the change-points that are the boundaries of the segments, it is also called joinpoint regression. 
The analysis package developed by \cite{KimFFM00} has become a standard tool to investigate trends in time series data in the field of medical epidemiology.
For example, it is used in annual statistical reports of large-scale projects analyzing central nervous system tumors (the latest is \citealt{Ost23}), and it is also used in analyses of medically important topics such as cancer and cardiovascular risk (\citealt{How20} and \citealt{He21}).
Although the slope of the regression function changes in joinpoint regression, it is natural to assume that the value of the regression function itself changes, that is, the regression function is discontinuous at the change-points, depending on the situation.
In fact, in the field of epidemiology, this model is used for analysis in \cite{JiaZS22}, which is treated as a discussion paper in the Journal of the Royal Statistical Society Series B.
The data analyzed in that paper are the number of new COVID-19 infections, and provide estimated regression functions like the red line in the right panel of Figure \ref{fig1}.
On the other hand, it is natural to consider the regression function to be continuous when dealing with the number of new infections, and the purpose of this paper is to estimate such a regression function using a statistically valid method.
Actually, in our method, the red line in the left panel is provided, and it is inferred that some of the jumps in the right panel may not exist.

\begin{figure}[!t]
\begin{overpic}[scale=0.44]{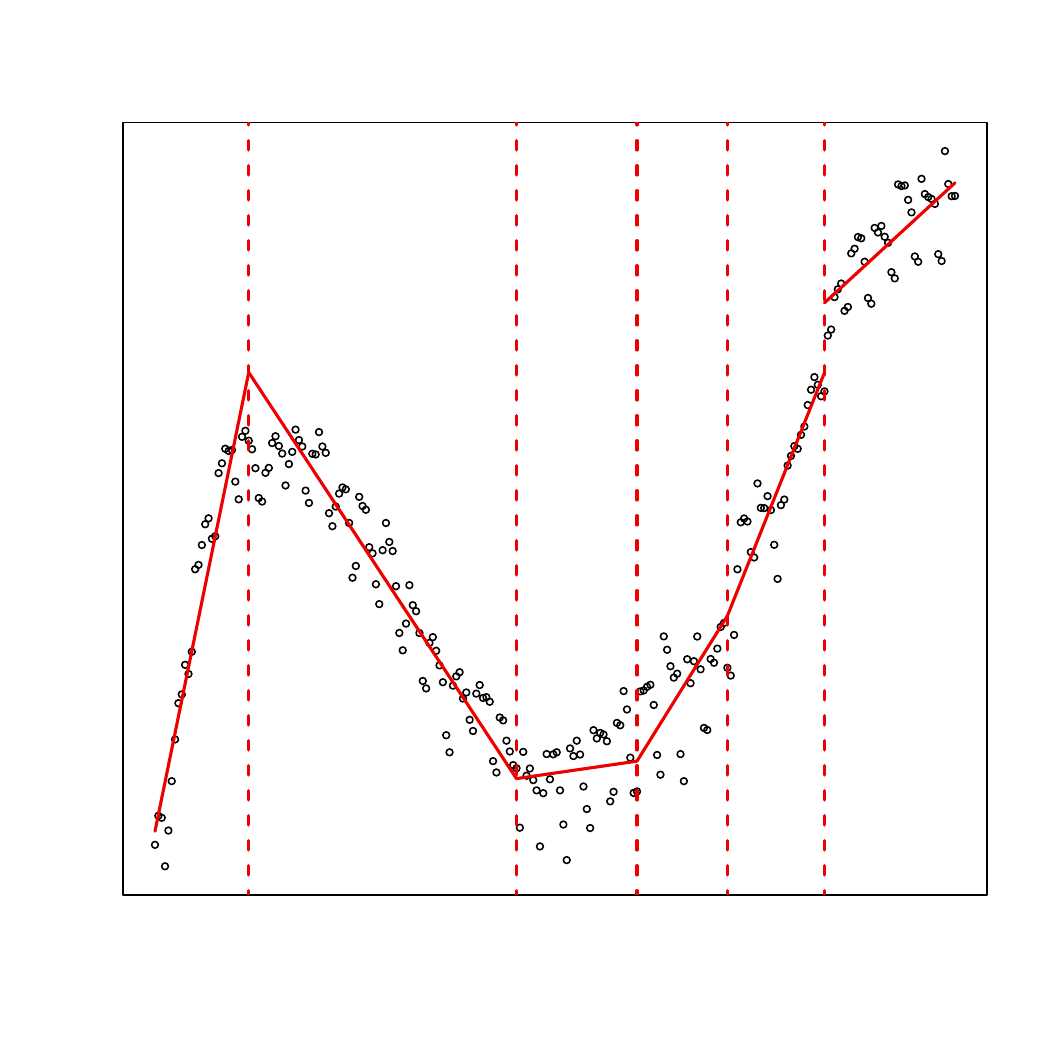}
\put(0,28){\footnotesize\rotatebox{90}{logarithm of new cases}}
\put(6,18){\scriptsize\rotatebox{90}{6}}
\put(6,33){\scriptsize\rotatebox{90}{7}}
\put(6,48){\scriptsize\rotatebox{90}{8}}
\put(6,63){\scriptsize\rotatebox{90}{9}}
\put(6,78){\scriptsize\rotatebox{90}{10}}
\put(9,9){\scriptsize Mar-13}
\put(28,9){\scriptsize May-12}
\put(47,9){\scriptsize Jul-11}
\put(66,9){\scriptsize Sep-09}
\put(85,9){\scriptsize Nov-07}
\put(50,3){\footnotesize date}
\end{overpic}
\hspace{5mm}
\begin{overpic}[scale=0.44]{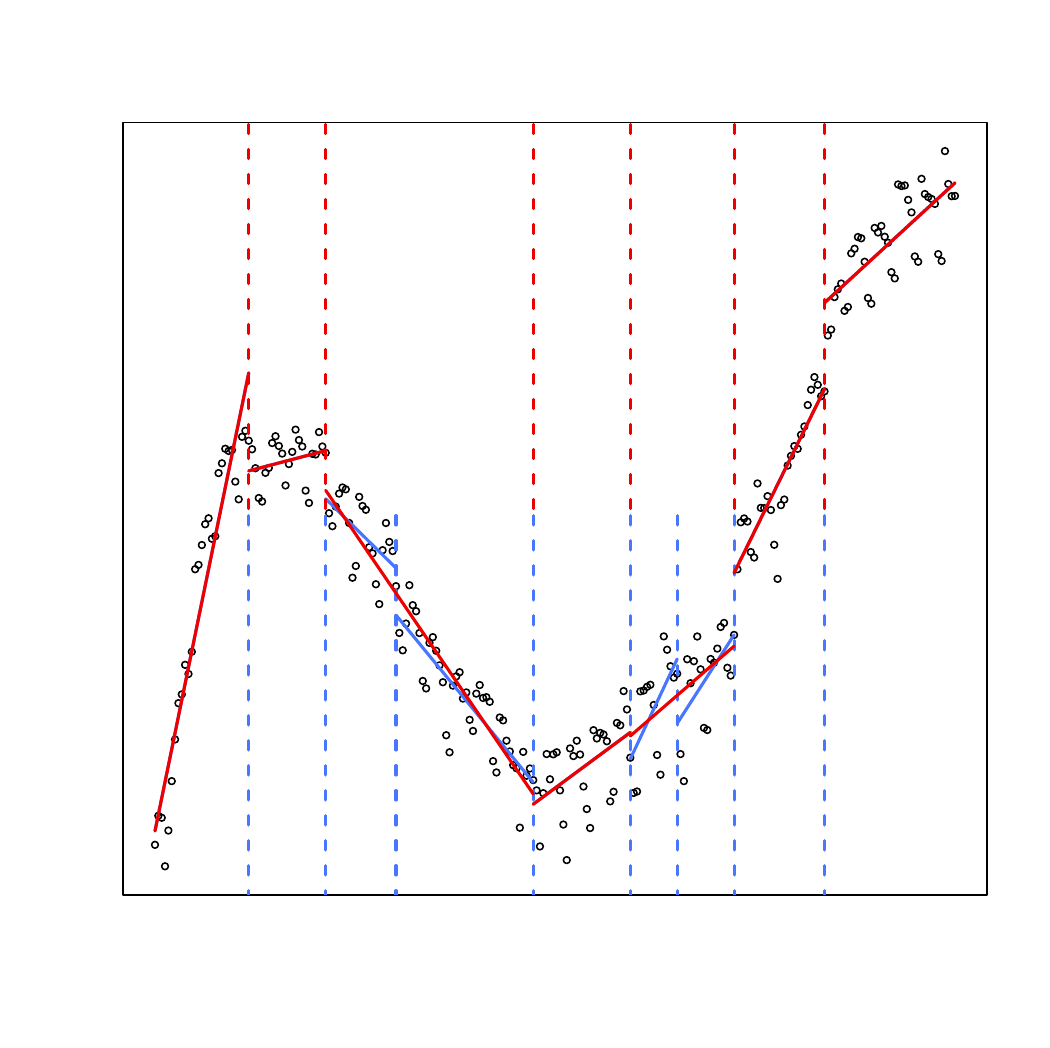}
\put(0,28){\footnotesize\rotatebox{90}{logarithm of new cases}}
\put(6,18){\scriptsize\rotatebox{90}{6}}
\put(6,33){\scriptsize\rotatebox{90}{7}}
\put(6,48){\scriptsize\rotatebox{90}{8}}
\put(6,63){\scriptsize\rotatebox{90}{9}}
\put(6,78){\scriptsize\rotatebox{90}{10}}
\put(9,9){\scriptsize Mar-13}
\put(28,9){\scriptsize May-12}
\put(47,9){\scriptsize Jul-11}
\put(66,9){\scriptsize Sep-09}
\put(85,9){\scriptsize Nov-07}
\put(50,3){\footnotesize date}
\end{overpic}
\caption{Results obtained by allowing the regression function to be continuous at the change-points (left panel) and not allowing it to be continuous (right panel) for the number of new COVID-19 infections in the United Kingdom: The red solid and broken lines are the regression function and change-points estimated using the proposed AIC. The blue solid and broken lines in the right panel are the regression function and change-points estimated using AIC$_{\rm naive}$.}
\label{fig1}
\end{figure}

In the above papers, denoting the response variable as $y$ and the explanatory variable as $x$, models given by
\begin{align}
y_i=\sum_{k=1}^{m+1}I_{(\tau^{[k-1]},\tau^{[k]}]}(x_i)(\theta_1^{[k]}+\theta_2^{[k]}x_i)+\epsilon_i, \qquad i\in\{1,2,\ldots,n\}
\label{model}
\end{align}
are treated, where $I_A(\cdot)$ is the indicator function for $A\subset\mathbb{R}$, and $I_A(x)=1$ when $x\in A$ and $I_A(x)=0$ when $x\notin A$.
In addition, $\tau^{[0]}$ and $\tau^{[m+1]}$ are known constants, $\bm{\tau}=(\tau^{[1]},\tau_2,\ldots,\tau^{[m]})^\T$ and $\bm{\theta}^{[k]}=(\theta_1^{[k]}, \theta_2^{[k]})^{\T}$ ($k=1,2,\ldots,m+1$) are unknown parameters.
This model has a regression function for each segment $(\tau^{[k-1]},\tau^{[k]}]$, and the change-point $\tau^{[k]}$ is also called a node.
If $\theta_1^{[k]}+\theta_2^{[k]}\tau^{[k]}=\theta_1^{[k+1]}+\theta_2^{[k+1]}\tau^{[k]}$, then the regression function is continuous at $\tau^{[k]}$; and if $\theta_1^{[k]}+\theta_2^{[k]}\tau^{[k]}\neq\theta_1^{[k+1]}+\theta_2^{[k+1]}\tau^{[k]}$, the regression function is discontinuous at $\tau^{[k]}$.

Segmented regression has long been of interest not only from an applied perspective, but also from a theoretical perspective.
This is because the change-point parameter $\tau^{[k]}$ requires specific care in statistical asymptotic theory. 
For example, let us suppose that the regression function is discontinuous at $\tau^{[k]}$ and that $\epsilon_i$ follows some parametric distribution independently.
In this case, the log-likelihood function is not differentiable at $\tau^{[k]}$ and therefore lacks regularity, leading to the development of a specific asymptotic theory.
In fact, since Taylor expansion is not applicable, the maximum likelihood estimator of $\tau^{[k]}$ does not have asymptotic normality.
However, when the regression function is continuous at $\tau^{[k]}$, the maximum likelihood estimator eventually satisfies asymptotic normality, as shown in \cite{Fed75a}.
The proof method is specific to segmented regression and for some time there were few extensions of the theory; \cite{Wu92} addressed the case where the covariates are multidimensional, and the final form was given by \cite{LiuWZ97}.
In addition, \cite{KimK08} deals with cases where there are two variables to consider for segmentation, and what is noteworthy is the sophistication of the proof method, which we will rely on in this paper.
The corresponding author of this paper has developed segmented regression in various directions, not limited to those in \cite{KimFFM00} and \cite{KimK08}, which are summarized in \cite{KimCBWF22}.
Regarding the specific asymptotic theory when the regression function is discontinuous at $\tau^{[k]}$, Chapter 3 of \cite{CsoH97} deals with this topic; the important papers that first addressed this topic are \cite{GomH94}, \cite{Bai97}, and \cite{BaiP98}.
There are numerous extensions to these methods; notable examples include \cite{AueH13}, which reviews cases with serial correlation, \cite{BerGHK09}, which begins the extension to function data, and \cite{AueHHR09}, which handles high-dimensional data with an infinite variance-covariance structure.
Recently, segmented regression has also gained importance in the field of machine learning, with algorithm development being carried out in \cite{AchDLS16}, \cite{SiaGKS20}, \cite{Bem22}, and others.

In segmented regression, it is natural to assume that the number of change-points is unknown, just as it is usually assumed that the locations of change-points are unknown.
Estimating this number is part of model selection, indicating that model selection is an essential task even in segmented regression.
In fact, the aforementioned \cite{LiuWZ97} proposes an information criterion that can be called BIC-type, which is slightly more penalizing than BIC (\citealt{Sch78}), and shows that model selection consistency is achieved regardless of whether the regression function is continuous or discontinuous.
In addition, \cite{KimYF09} compares test-based methods and several information criteria and concludes that BIC is superior in terms of the probability of selecting the true model.
These BIC-types impose the same penalty on the change-points regardless of whether the regression function is continuous or discontinuous, but this is unnatural considering that the asymptotic properties of the estimators differ between the two.
Therefore, \cite{KimK16} refers to \cite{ZhaS07}, which derived an information criterion for fundamental change-point models based on the original derivation of BIC, and intuitively proposes different penalty terms for the two, showing that the consistency of model selection is maintained.
In this paper, with the aim of theoretically providing different penalty terms for continuous and discontinuous regression functions, we derive an information criterion based on the original derivation of AIC (Aka73), that is, as an asymptotic bias-corrected estimator of the Kullback-Leibler divergence.
Note that the target of model selection differs from that of BIC, so the goal is not to increase the probability of selecting the true model but to minimize the divergence between the true and estimated structures.
When the regression function is discontinuous, the method of \cite{Nin15}, which derived AIC for the basic change-point model, can be used. As a result, a penalty of $6$, which is much larger than the penalty of $2$ that appears for each parameter in AIC of regular statistical models, appears for the change-point parameter.
On the other hand, when the regression function is continuous, the penalty for the change-point parameter remains $2$.

This paper is organized as follows.
In Section \ref{sec2}, the model to be treated is provided first as a preparation.
There, rather than treating the target of this paper as in \eqref{model}, we consider a natural generalization that is based on generalized linear models or not necessarily piecewise linear.
We then confirm that the maximum likelihood estimator is consistent.
In Section \ref{sec3}, we first show that the estimators for the continuous segmented regression model are asymptotically normal.
Since the model is not necessarily piecewise linear, it is not possible to show this in the same way as in the method of \cite{KimK08}, and some assumptions are necessary, although the result will be of the same type.
We then derive AIC based on it and see that the resulting penalty for the change-point parameter is $2$.
This AIC is used intuitively (e.g. \citealt{KimCBWF22}), and our contribution is to give its guarantee in a generalized model.
Since the derivative of the regression function changes abruptly even when the regression function is continuous, one might intuitively think that the penalty for the change-point parameter is $6$, and a naive AIC would define it as such.
In Section \ref{sec4}, we derive AIC directly for the discontinuous segmented regression model, omitting to mention that the asymptotic properties of the estimators are not different from those of the fundamental change-point model, since it is trivial.
As a result, we see that the penalty for the change-point parameter is $6$.
As in \cite{Nin15}, the penalty is set to $2$ for the naive AIC.
In Section \ref{sec5}, we first check through numerical experiments that the derived penalty term accurately approximates the bias that appears when evaluating the Kullback-Leibler divergence.
Then, model selection by the derived AIC, naive AIC, and BIC is performed on artificial data, and the results are compared to confirm that the derived AIC gives successful results in terms of reducing the divergence.
In Section \ref{sec6}, a real data analysis is performed to obtain Figure \ref{fig1}.
We also check that the results of these three criteria are reasonably different.
Section \ref{sec7} gives a summary and discussion.

\section{Preparation}\label{sec2}

For each sample $i\in\{1,2,\ldots,n\}$, let us suppose that the explanatory and response variable pairs $(x_i,y_i)$ are obtained and $(x_i,y_i)$ independently follow the same distribution as $(x,y)$.
Here, $x$ is a random variable following a known probability function $q(\cdot)$ taking the value in $(\tau^{[0]},\tau^{[m+1]}]$, $y$ is a random variable following a generalized linear model that differs for each segment to which $x$ belongs, and the simultaneous probability function is
\begin{align}
f(x,y;\bm{\xi})=\sum_{k=1}^{m+1}I_{(\tau^{[k-1]},\tau^{[k]}]}(x)\exp\{y\phi(x;\bm{\theta}^{[k]})-a(\phi(x;\bm{\theta}^{[k]}))+b(y)\} q(x),
\label{model2}
\end{align}
where the space of parameter $\bm{\theta}^{[k]}\in\mathbb{R}^p$ is compact, $a(\cdot)$, $b(\cdot)$ and $\phi(\cdot)$ are known C$^2$ class functions, $\tau^{[0]}$ and $\tau^{[m+1]}$ are known values.
Setting $\bm{x}=(1,x,x^2,\ldots,x^{p-1})^{\T}$, the example supposed in $\phi(x;\bm{\theta}^{[k]})$ is $\bm{\theta}^{\T}\bm{x}$.
Also, often later, we use the notation $\bm{\theta}=(\bm{\theta}^{[1]\T},\bm{\theta}^{[2]\T},\ldots,\allowbreak\bm{\theta}^{[m+1]\T})^\T$, $\bm{\tau}=(\tau^{[1 ]},\tau^{[2]},\ldots,\tau^{[m]})^{\T}$, $\bm{\xi}=(\bm{\theta}^\T,\bm{\tau}^\T)^\T$.
Needless to say, this model is a generalization of \eqref{model}.

Let us denote the true values of parameters by attaching $^*$, so that, for example, the true value of $\bm{\xi}$ is denoted by $\bm{\xi}^*$.
We define $\tau^{[0]*}$ and $\tau^{[m+1]*}$ as the known values of $\tau^{[0]}$ and $\tau^{[m+1]}$, respectively.
For this $\bm{\xi}^*$, let us consider the maximum likelihood estimator $\hat{\bm{\xi}}=\argsup_{\bm{\xi}}(\prod_{i=1}^n f(x_i,y_i;\bm{\xi}))$.
Denoting the difference between the log-likelihood function for $(x,y)$ and the true log-likelihood as
\begin{align*}
g(\bm{\xi};x,y) = & \sum_{k=1}^{m+1}I_{(\tau^{[k-1]},\tau^{[k]}]}(x)\{y\phi(x;\bm{\theta}^{[k]})-a(\phi(x;\bm{\theta}^{[k]}))\} 
\\
& - \sum_{k=1}^{m+1}I_{(\tau^{[k-1]*},\tau^{[k]*}]}(x)\{y\phi(x;\bm{\theta}^{[k]*})-a(\phi(x;\bm{\theta}^{[k]*}))\},
\end{align*}
we can write that $\hat{\bm{\xi}}=\argsup_{\bm{\xi}}(\sum_{i=1}^n g(\bm{\xi};x_i, y_i)/n)$.
In order to discuss the asymptotic properties of $\hat{\bm{\xi}}$, we make the following assumption.
\begin{asm}
In any small neighborhood of $\tau^{[k]*}$, $x$ has a continuous and positive probability density function for $k\in\{1,2,\ldots,m\}$.
In addition, the regression function or its derivative is not continuous at the change-points, that is, $\phi(\tau^{[k]*}; \bm{\theta}^{[k]*})\neq\phi(\tau^{[k]*};\bm{\theta}^{[k+1]*})$ or $\partial\phi(\tau^{[k]*};\bm{\theta}^{[k]*})/\partial x\neq\partial\phi(\tau^{[k]*};\bm{\theta}^{[k+1]*})/\partial x$ for $k\in\{1,2,\ldots,m\}$.
\label{asm1}
\end{asm}
\noindent Then the following holds.
\begin{thm}
Under Assumption \ref{asm1}, $\hat{\bm{\xi}}$ is consistent, i.e. $\P(\|\hat{\bm{\xi}}-\bm{\xi}^*\|\leq\delta)=\o(1)$ for any $\delta>0$.
\label{thm1}
\end{thm}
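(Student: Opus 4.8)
The plan is to run a Wald-type argmax consistency argument. Write the normalized objective as $\bar g_n(\bm{\xi})=n^{-1}\sum_{i=1}^n g(\bm{\xi};x_i,y_i)$ and its population counterpart as $M(\bm{\xi})=\E[g(\bm{\xi};x,y)]$, where the expectation is taken under the true parameter $\bm{\xi}^*$. Since $g(\bm{\xi};x,y)=\log f(x,y;\bm{\xi})-\log f(x,y;\bm{\xi}^*)$, the information inequality gives $M(\bm{\xi})=-\E[\log\{f(x,y;\bm{\xi}^*)/f(x,y;\bm{\xi})\}]\le 0$ with $M(\bm{\xi}^*)=0$; that is, $-M(\bm{\xi})$ is the Kullback--Leibler divergence from the model at $\bm{\xi}$ to the truth. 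Thus $\bm{\xi}^*$ maximizes $M$ and $\hat{\bm{\xi}}$ maximizes $\bar g_n$, so consistency will follow once I establish (i) that $\bm{\xi}^*$ is the well-separated unique maximizer of the continuous function $M$, and (ii) that $\bar g_n$ converges to $M$ uniformly over the (compact) parameter space $\Xi$.

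For (i), first observe that $M$ is continuous in $\bm{\xi}$: the dependence on $\bm{\theta}$ is smooth because $\phi$ and $a$ are C$^2$, while $\bm{\tau}$ enters only through the indicators $I_{(\tau^{[k-1]},\tau^{[k]}]}(x)$, whose discontinuities are integrated out against the density of $x$ (since $\P(x=\tau)=0$), leaving integrals with endpoints $\tau^{[k]}$ that depend continuously on $\bm{\tau}$. Uniqueness of the maximizer is where Assumption \ref{asm1} is essential. If $M(\bm{\xi})=0$ then $f(\cdot,\cdot;\bm{\xi})=f(\cdot,\cdot;\bm{\xi}^*)$ almost everywhere; since the exponential family is identifiable in its natural parameter ($a$ strictly convex), the piecewise regression function determined by $\bm{\xi}$ must coincide almost everywhere with that determined by $\bm{\xi}^*$. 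Suppose some $\tau^{[k]}\ne\tau^{[k]*}$. On a neighborhood of $\tau^{[k]*}$, where $x$ has positive density by Assumption \ref{asm1}, the candidate function reduces to a single C$^2$ piece $\phi(\cdot;\bm{\theta}^{[k]})$, and by continuity it must agree with $\phi(\cdot;\bm{\theta}^{[k]*})$ on the left of $\tau^{[k]*}$ and with $\phi(\cdot;\bm{\theta}^{[k+1]*})$ on the right; being C$^1$, it would then force both the value and the derivative of the truth to match at $\tau^{[k]*}$, contradicting Assumption \ref{asm1}. Hence $\bm{\tau}=\bm{\tau}^*$, and matching on each segment (again using positivity of the density there) forces $\bm{\theta}=\bm{\theta}^*$. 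Because the slice of the compact parameter space $\{\bm{\xi}\in\Xi:\|\bm{\xi}-\bm{\xi}^*\|\ge\delta\}$ is compact and $M$ is continuous with $\bm{\xi}^*$ its unique zero, this yields $\sup_{\|\bm{\xi}-\bm{\xi}^*\|\ge\delta}M(\bm{\xi})<0$ for every $\delta>0$.

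For (ii), I must show that the class $\mathcal{G}=\{g(\bm{\xi};\cdot,\cdot):\bm{\xi}\in\Xi\}$ is Glivenko--Cantelli. Writing $g$ as a finite sum of terms $I_{(\tau^{[k-1]},\tau^{[k]}]}(x)\{y\phi(x;\bm{\theta}^{[k]})-a(\phi(x;\bm{\theta}^{[k]}))\}$, the smooth factors are continuous in $\bm{\theta}$ over a compact set and the interval indicators form a VC class, so by the standard permanence properties of Glivenko--Cantelli classes it suffices to exhibit an integrable envelope. On the compact support $(\tau^{[0]},\tau^{[m+1]}]$ with $\bm{\theta}$ ranging over a compact set, $\phi$ and $a(\phi)$ are bounded, so $|g(\bm{\xi};x,y)|\le C(1+|y|)$ uniformly in $\bm{\xi}$, and $\E|y|<\infty$ under the generalized linear model. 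Consequently $\sup_{\bm{\xi}\in\Xi}|\bar g_n(\bm{\xi})-M(\bm{\xi})|\to 0$ in probability.

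Finally, combining (i) and (ii): fix $\delta>0$ and set $\eta=-\sup_{\|\bm{\xi}-\bm{\xi}^*\|\ge\delta}M(\bm{\xi})>0$. On the event $\{\sup_{\bm{\xi}}|\bar g_n(\bm{\xi})-M(\bm{\xi})|<\eta/2\}$, whose probability tends to $1$, any maximizer with $\|\hat{\bm{\xi}}-\bm{\xi}^*\|\ge\delta$ would give $\bar g_n(\hat{\bm{\xi}})<M(\hat{\bm{\xi}})+\eta/2\le-\eta/2<M(\bm{\xi}^*)-\eta/2<\bar g_n(\bm{\xi}^*)$, contradicting optimality of $\hat{\bm{\xi}}$; hence $\P(\|\hat{\bm{\xi}}-\bm{\xi}^*\|>\delta)\to 0$. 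I expect the main obstacle to be the identifiability step, namely showing that displacing a change-point strictly lowers $M$ even in the continuous case where only the derivative of $\phi$ jumps; this requires carefully exploiting the positivity and continuity of the density near $\tau^{[k]*}$ so that a C$^2$ candidate cannot imitate the derivative discontinuity of the truth. The non-smoothness of $g$ in $\bm{\tau}$ is a secondary nuisance that is absorbed by the VC-class and envelope argument underlying the uniform law.
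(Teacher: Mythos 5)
Your proposal is correct and follows essentially the same route as the paper: a Wald-type argmax consistency argument that combines (a) separation of the population criterion $\E[g(\bm{\xi};x,y)]$ away from its maximum value $0$ outside any $\delta$-ball around $\bm{\xi}^*$ with (b) a uniform law of large numbers for the class $\{g(\bm{\xi};\cdot,\cdot)\}$ over the compact parameter space (the paper invokes Lemma 19.34 of van der Vaart via a bracketing bound where you use a VC-plus-envelope Glivenko--Cantelli argument, but these are interchangeable here). The only notable difference is that you derive the separation qualitatively from identifiability, continuity of $M$, and compactness, and in doing so you actually spell out the identifiability step (why Assumption \ref{asm1} forbids a smooth candidate from matching the truth when a change-point is displaced) that the paper compresses into an asserted quantitative bound $\inf_{\|\bm{\xi}-\bm{\xi}^*\|>\delta}\P(|\mu(\bm{\xi};x)-\mu(\bm{\xi}^*;x)|>c\|\bm{\xi}-\bm{\xi}^*\|)>\epsilon$ together with the strong convexity of $a$.
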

\begin{proof}
Taking $\sup_{\|\bm{\xi}-\bm{\xi}^*\|>\delta}$ in $n^{-1}\sum_{i=1}^ng(\bm{\xi};x_i,y_i) \leq \E[g(\bm{\xi};x,y)]+||n^{-1}\sum_{i=1}^ng(\bm{\xi};x_i,y_i)-\E[g(\bm{\xi};x,y)]|$, which can be easily checked, we obtain
\begin{align}
& \sup_{\|\bm{\xi}-\bm{\xi}^*\|>\delta}\frac{1}{n}\sum_{i=1}^ng(\bm{\xi};x_i,y_i)
\notag \\
& \leq \sup_{\|\bm{\xi}-\bm{\xi}^*\|>\delta}\E[g(\bm{\xi};x,y)] + \sup_{\|\bm{\xi}-\bm{\xi}^*\|>\delta}\bigg|\frac{1}{n}\sum_{i=1}^ng(\bm{\xi};x_i,y_i)-\E[g(\bm{\xi};x,y)]\bigg|.
\label{th1ineq}
\end{align}
To prove consistency, it is sufficient to show that the left-hand side is smaller than $n^{-1}\sum_{i=1}^ng(\bm{\xi}^*;\allowbreak x_i,y_i)=0$ with probability approaching $1$ as $n\to\infty$.
First, let us consider the first term on the right-hand side of \eqref{th1ineq}.
Letting $\mu(\bm{\xi};x)\equiv\sum_{k=1}^{m+1}I_{(\tau^{[k-1]},\tau^{[k]}]}(x)\phi(x;\bm{\theta}^{[k]})$, there exists $\epsilon>0$ such that we can take $c>0$ satisfying
\begin{align*}
\inf_{\|\bm{\xi}-\bm{\xi}^*\|>\delta}\P(|\mu(\bm{\xi},x)-\mu(\bm{\xi}^*,x)|>c\|\bm{\xi}-\bm{\xi}^*\|)>\epsilon
\end{align*}
for any $\delta>0$. 
Now we use 
\begin{align*}
\E[g(\bm{\xi};x,y)] & = \E[y\{\mu(\bm{\xi};x)-\mu(\bm{\xi}^*;x)\}-\{a(\mu(\bm{\xi};x))-a(\mu(\bm{\xi}^*;x))\}]
\\
& = \E[a'(\mu(\bm{\xi}^*;x))\{\mu(\bm{\xi};x)-\mu(\bm{\xi}^*;x)\}-\{a(\mu(\bm{\xi};x))-a(\mu(\bm{\xi}^*;x))\}]
\\
& = \E[-a''(\tilde{\mu}(\bm{\xi},\bm{\xi}^*;x))\{\mu(\bm{\xi};x)-\mu(\bm{\xi}^*;x)\}^2],
\end{align*}
where $a'(\cdot)$ and $a''(\cdot)$ denote the first and second derivatives of $a(\cdot)$, respectively, and $\tilde{\mu}(\bm{\xi},\bm{\xi}^*;x)$ is an appropriate value between $\mu(\bm{\xi};x)$ and $\mu(\bm{\xi}^*;x)$.
Since it follows from the properties of $a(\cdot)$ that $\eta\equiv\inf_{\bm{\xi},x}a''(\tilde{\mu}(\bm{\xi},\bm{\xi}^*;x))>0$, we obtain
\begin{align*}
\E[g(\bm{\xi};x,y)] \leq -\E[\eta(c\delta)^2I_{\{|\mu(\bm{\xi},x)-\mu(\bm{\xi}^*,x)|\ge c\delta\}}(x)] \leq -\eta c^2\delta^2\epsilon
\end{align*}
for any $\bm{\xi}$ such that $\|\bm{\xi}-\bm{\xi}^*\|>\delta$.
Next, regarding the second term on the right-hand side in \eqref{th1ineq}, Lemma 19.34 of \cite{Van00}, for example, indicates that it is $\oP(1)$. 
This is because $g(\bm{\xi};x,y)$ is differentiable with respect to $\bm{\xi}$ and the parameter space is compact and therefore the bracketing integral is bounded.
This completes the proof.
\end{proof}

\noindent
In the following sections, we will derive AIC based on asymptotics; from the consistency, $\hat{\tau}^{[k^{\dagger}]}<\hat{\tau}^{[k^{\ddagger}]}$ can be assumed if $\tau^{[k^{\dagger}]*}<\tau^{[k^{\ddagger}]*}$. 
In addition, when it is necessary to determine the order of $\tau^{[k]*}$ and $\tau^{[k]}$ in the proof, we assume that $\tau^{[k]*}<\tau^{[k]}$, and give the result without this assumption.
This is because although the case $\tau^{[k]*}>\tau^{[k]}$ can be trivially developed in the same way, describing both cases would be unnecessarily tedious. 

\section{AIC for continuous segmented regression models}\label{sec3}

To discuss the asymptotic properties of the estimators, it is necessary to distinguish cases where the regression function $\sum_{k=1}^{m+1}I_{(\tau^{[k-1]},\tau^{[k]}]}(x)\phi(x;\bm{\theta}^{[k]})$ is continuous or not at the change-points $\tau^{[1]},\tau^{[2]},\ldots ,\tau^{[m]}$, and we consider the continuous case in this section.
As in \cite{KimK08}, letting
\begin{align*}
& G(\bm{\theta}^{\dagger},\bm{\theta}^{\ddagger},(\tau^{\dagger},\tau^{\ddagger}])
\\
& \equiv\frac{1}{n}\sum_{i=1}^{n}I_{(\tau^{\dagger},\tau^{\ddagger}]}(x_i)[\{y_i\phi(x_i;\bm{\theta}^{\dagger})-a(\phi(x_i;\bm{\theta}^{\dagger}))\} - \{y_i\phi(x_i;\bm{\theta}^{\ddagger})-a(\phi(x_i;\bm{\theta}^{\ddagger}))\}]
\end{align*}
for any $\bm{\theta}^{\dagger},\bm{\theta}^{\ddagger}\in\mathbb{R}^p$ and $(\tau^{\dagger},\tau^{\ddagger}]\subset\mathbb{R}$, we have
\begin{align*}
\frac{1}{n}\sum_{i=1}^ng(\bm{\xi};x_i, y_i)=\sum_{k=1}^{m+1}G(\bm{\theta}^{[k]},\bm{\theta}^{[k]*},(\tau^{[k-1]},\tau^{[k]*}])+\sum_{k=1}^{m}G(\bm{\theta}^{[k]},\bm{\theta}^{[k+1]*},(\tau^{[k]*},\tau^{[k]}]). 
\end{align*}
Further letting
\begin{align*}
& H(\bm{\theta}) \equiv \sum_{k=1}^{m+1}G(\bm{\theta}^{[k]},\bm{\theta}^{[k]*},(\tau^{[k-1]*},\tau^{[k]*}]),
\\
& R(\bm{\theta},\bm{\tau}) \equiv \frac{1}{n}\sum_{i=1}^ng(\bm{\xi};x_i, y_i) - H(\bm{\theta})
\\
& = \sum_{k=1}^{m}(G(\bm{\theta}^{[k]},\bm{\theta}^{[k+1]*},(\tau^{[k]*},\tau^{[k]}]) - G(\bm{\theta}_{k+1},\bm{\theta}^{[k+1]*},(\tau^{[k]*},\tau^{[k]}])), 
\end{align*}
the following two lemmas hold.
The first is fundamental and not specific to this setting, and therefore the proof is given in Appendix.
\begin{lem}
Let $\bm{w}$ be a random vector following a multi-dimensional Gaussian distribution ${\rm N}(\bm{0}_{p(m+1)},\bm{T})$ with the mean vector $\bm{0}_{p(m+1)}$ and the variance-covariance matrix $\bm{T}$, where
\begin{align*}
\bm{T}^{[k]} = \E\bigg[I_{(\tau^{[k-1]*},\tau^{[k]*}]}(x)a''(\phi(x;\bm{\theta}^{[k]*}))\bigg\{\frac{\partial}{\partial\bm{\theta}}\phi(x;\bm{\theta}^{[k]*})\bigg\}\bigg\{\frac{\partial}{\partial\bm{\theta}}\phi(x;\bm{\theta}^{[k]*})\bigg\}^\T\bigg]
\end{align*}
and
\begin{align*}
\bm{T} = {\rm diag}(\bm{T}^{[1]},\bm{T}^{[2]},\ldots,\bm{T}^{[m+1]}) = 
\begin{pmatrix}
\bm{T}^{[1]} & \bm{\mbox{O}}_{p\times p} & \cdots & \bm{\mbox{O}}_{p\times p} \\
\bm{\mbox{O}}_{p\times p} & \bm{T}^{[2]} & \ddots & \vdots \\
\vdots & \ddots & \ddots & \bm{\mbox{O}}_{p\times p} \\
\bm{\mbox{O}}_{p\times p} & \cdots & \bm{\mbox{O}}_{p\times p} & \bm{T}^{[m+1]} \\
\end{pmatrix}.
\end{align*}
Then, under Assumption \ref{asm1}, if $\|\bm{\theta}-\bm{\theta}^*\|=\o(1)$, it holds that
\begin{align*}
H(\bm{\theta}) = -\frac{1}{2}(\bm{\theta}-\bm{\theta}^*)^\T\{\bm{T}+\oP(1)\}(\bm{\theta}-\bm{\theta}^*)+\frac{1}{\sqrt{n}}(\bm{\theta}-\bm{\theta}^*)^\T\{\bm{w}+\oP(1)\}+\oP(\|\bm{\theta}-\bm{\theta}^*\|^2).
\end{align*}
\label{lem1}
\end{lem}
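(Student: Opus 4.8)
The plan is to exploit the fact that in $H(\bm{\theta})$ the change-points are frozen at their true values $\tau^{[k-1]*},\tau^{[k]*}$, so $H$ carries none of the change-point irregularity and is simply a sum over $k$ of smooth log-likelihood differences on the fixed segments $(\tau^{[k-1]*},\tau^{[k]*}]$. Writing the per-observation contribution as $\ell_i(\bm{\theta}^{[k]})\equiv y_i\phi(x_i;\bm{\theta}^{[k]})-a(\phi(x_i;\bm{\theta}^{[k]}))$, each summand equals $n^{-1}\sum_i I_{(\tau^{[k-1]*},\tau^{[k]*}]}(x_i)\{\ell_i(\bm{\theta}^{[k]})-\ell_i(\bm{\theta}^{[k]*})\}$, a $C^2$ function of $\bm{\theta}^{[k]}$ by the assumed smoothness of $\phi$ and $a$. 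First I would Taylor expand each $\ell_i$ about $\bm{\theta}^{[k]*}$ to second order, using the score $s_i(\bm{\theta})=\{y_i-a'(\phi(x_i;\bm{\theta}))\}\,\partial\phi(x_i;\bm{\theta})/\partial\bm{\theta}$ and the Hessian $-a''(\phi)\,(\partial\phi/\partial\bm{\theta})(\partial\phi/\partial\bm{\theta})^\T+\{y_i-a'(\phi)\}\,\partial^2\phi/(\partial\bm{\theta}\,\partial\bm{\theta}^\T)$, the latter evaluated at an intermediate point.

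For the linear term, the key point is that within the $k$-th true segment the model is correctly specified, so $\E[y\mid x]=a'(\phi(x;\bm{\theta}^{[k]*}))$ and hence $\E[I_{(\tau^{[k-1]*},\tau^{[k]*}]}(x)\,s(\bm{\theta}^{[k]*})]=\bm{0}$. I would therefore write $n^{-1}\sum_i I_{(\tau^{[k-1]*},\tau^{[k]*}]}(x_i)s_i(\bm{\theta}^{[k]*})=n^{-1/2}\{\bm{w}^{[k]}+\oP(1)\}$, where by the multivariate central limit theorem $n^{-1/2}\sum_i I_{(\tau^{[k-1]*},\tau^{[k]*}]}(x_i)s_i(\bm{\theta}^{[k]*})$ converges to a Gaussian vector $\bm{w}^{[k]}$. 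Stacking over $k$, each observation contributes to only one block because the segments are disjoint, so the $(k,l)$ cross-covariance vanishes for $k\neq l$, while the $k$-th diagonal block is $\E[I_{(\tau^{[k-1]*},\tau^{[k]*}]}(x)\,s(\bm{\theta}^{[k]*})s(\bm{\theta}^{[k]*})^\T]=\bm{T}^{[k]}$ by the exponential-family information identity, namely that the variance of the score equals minus the expected Hessian. This yields exactly $\bm{w}\sim\N(\bm{0}_{p(m+1)},\bm{T})$ with the block-diagonal $\bm{T}$, and the stacked linear term becomes $n^{-1/2}(\bm{\theta}-\bm{\theta}^*)^\T\{\bm{w}+\oP(1)\}$.

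For the quadratic term I would apply the weak law of large numbers: the mean-zero part $n^{-1}\sum_i I_{(\tau^{[k-1]*},\tau^{[k]*}]}(x_i)\{y_i-a'(\phi)\}\partial^2\phi/(\partial\bm{\theta}\,\partial\bm{\theta}^\T)$ is $\oP(1)$, whereas $n^{-1}\sum_i I_{(\tau^{[k-1]*},\tau^{[k]*}]}(x_i)a''(\phi)(\partial\phi/\partial\bm{\theta})(\partial\phi/\partial\bm{\theta})^\T\to\bm{T}^{[k]}$ in probability, so the segment Hessian equals $-\bm{T}^{[k]}+\oP(1)$. Since it is evaluated at an intermediate point $\tilde{\bm{\theta}}$ with $\|\tilde{\bm{\theta}}-\bm{\theta}^{[k]*}\|\le\|\bm{\theta}^{[k]}-\bm{\theta}^{[k]*}\|=\o(1)$, continuity of the integrand and a uniform law of large numbers let me replace $\tilde{\bm{\theta}}$ by $\bm{\theta}^{[k]*}$ at the cost of a further $\oP(1)$ factor, which multiplied by the quadratic $\|\bm{\theta}^{[k]}-\bm{\theta}^{[k]*}\|^2$ is absorbed into the $\oP(\|\bm{\theta}-\bm{\theta}^*\|^2)$ remainder. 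Summing the per-segment expansions and using block-diagonality of $\bm{T}$ then assembles the claimed formula.

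Because $H$ is smooth in $\bm{\theta}$, this is essentially a textbook quadratic expansion of a log-likelihood and carries none of the delicate change-point asymptotics required later; the only genuine care needed is verifying the information identity $\V[\,\text{score}\,]=\bm{T}^{[k]}$ segment by segment, which rests on correct specification of the exponential family on each true segment and on Assumption \ref{asm1} guaranteeing a positive density near the boundaries so the segment masses are well defined, and making the $\oP$ statements uniform over the shrinking neighborhood $\|\bm{\theta}-\bm{\theta}^*\|=\o(1)$, which I would obtain from compactness of the parameter space together with the $C^2$, hence locally Lipschitz, dependence of the score and Hessian on $\bm{\theta}$, i.e. the same bracketing-type control invoked in the proof of Theorem \ref{thm1}. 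I expect the bookkeeping of these uniformity and remainder terms, rather than any conceptual difficulty, to be the most laborious step.
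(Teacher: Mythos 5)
Your proposal is correct and follows essentially the same route as the paper's own proof: a second-order Taylor expansion of the per-observation log-likelihood on each fixed true segment, the mean-zero score with the CLT giving the Gaussian linear term, the exponential-family information identity and the law of large numbers giving $-\bm{T}^{[k]}+\oP(1)$ for the Hessian, and block-diagonality from the disjointness of the segments. The only cosmetic difference is that the paper expands first in $\phi$ and then expands $\phi$ in $\bm{\theta}$, whereas you expand directly in $\bm{\theta}$; the resulting score and Hessian expressions are identical.
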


\noindent 
The second is specific to this setting and differs from \cite{KimK08} in some respects, and therefore the proof is included in this section.

\begin{lem}
Under Assumption \ref{asm1}, if $\|\bm{\theta}-\bm{\theta}^*\|=\o(1)$ and $\|\bm{\tau}-\bm{\tau}^*\|=\O(\|\bm{\theta}-\bm{\theta}^*\|)$, it holds that
\begin{align*}
R(\bm{\theta},\bm{\tau}) = \oP(\|\bm{\theta}-\bm{\theta}^*\|^2) + \oP\bigg(\frac{1}{\sqrt{n}}\|\bm{\theta}-\bm{\theta}^*\|\bigg).
\end{align*}
\label{lem2}
\end{lem}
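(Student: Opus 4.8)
The plan is to exploit the cancellation already built into the definition of $R$ together with the continuity constraint at the true change-points. First I would observe that in each summand of $R$ the two copies of $G$ share the second argument $\bm{\theta}^{[k+1]*}$ and the interval $(\tau^{[k]*},\tau^{[k]}]$, so those contributions cancel and
\[
R(\bm{\theta},\bm{\tau}) = \sum_{k=1}^{m}\frac{1}{n}\sum_{i=1}^{n} I_{(\tau^{[k]*},\tau^{[k]}]}(x_i)\bigl\{\psi(x_i,y_i;\bm{\theta}^{[k]}) - \psi(x_i,y_i;\bm{\theta}^{[k+1]})\bigr\},
\]
where $\psi(x,y;\bm{\theta}) = y\phi(x;\bm{\theta}) - a(\phi(x;\bm{\theta}))$. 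Since $m$ is fixed it suffices to bound each $k$-th term. On the support of the $k$-th indicator one has $0 < x - \tau^{[k]*} \le \tau^{[k]} - \tau^{[k]*} = \O(\|\bm{\tau}-\bm{\tau}^*\|) = \O(\|\bm{\theta}-\bm{\theta}^*\|)$, so that interval has length and, by Assumption \ref{asm1}, probability mass of order $\|\bm{\theta}-\bm{\theta}^*\|$.

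The decisive step is to use continuity. Writing $w = \phi(x;\bm{\theta}^{[k+1]*})$, $u = \phi(x;\bm{\theta}^{[k]})$ and $v = \phi(x;\bm{\theta}^{[k+1]})$, continuity of the regression function at $\tau^{[k]*}$ gives $\phi(\tau^{[k]*};\bm{\theta}^{[k]*}) = \phi(\tau^{[k]*};\bm{\theta}^{[k+1]*})$, whence $\phi(x;\bm{\theta}^{[k]*}) - \phi(x;\bm{\theta}^{[k+1]*}) = \O(|x-\tau^{[k]*}|) = \O(\|\bm{\theta}-\bm{\theta}^*\|)$ on the indicator's support; combined with $\phi(x;\bm{\theta}^{[k]}) - \phi(x;\bm{\theta}^{[k]*}) = \O(\|\bm{\theta}-\bm{\theta}^*\|)$ and the analogous bound for segment $k+1$, I obtain $u - w = \O(\|\bm{\theta}-\bm{\theta}^*\|)$ and $v - w = \O(\|\bm{\theta}-\bm{\theta}^*\|)$ uniformly in $x$. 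It is exactly here that the continuous case departs from the discontinuous one: without continuity, $\phi(x;\bm{\theta}^{[k]*}) - \phi(x;\bm{\theta}^{[k+1]*})$ would be $\O(1)$, which is what ultimately changes the rate and the penalty.

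Next I would substitute $y_i = a'(\phi(x_i;\bm{\theta}^{[k+1]*})) + e_i$ with $\E[e_i\mid x_i]=0$ and $\V[e_i\mid x_i] = a''(\phi(x_i;\bm{\theta}^{[k+1]*}))$ bounded; this is legitimate because an $x_i$ in the interval lies in the true segment $(\tau^{[k]*},\tau^{[k+1]*}]$. The bracket then becomes $F(u,v) + e_i(u-v)$ with $F(u,v) := a'(w)(u-v) - a(u) + a(v)$, splitting the $k$-th term into $n^{-1}\sum_i I(\cdot)F(u_i,v_i)$ and the mean-zero $n^{-1}\sum_i I(\cdot)e_i(u_i-v_i)$. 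For the first, a direct check gives $F(w,w)=0$ and $\partial_u F = a'(w)-a'(u)$, $\partial_v F = a'(v)-a'(w)$, both vanishing at $(w,w)$, so $C^2$-smoothness yields $|F| = \O((u-w)^2+(v-w)^2) = \O(\|\bm{\theta}-\bm{\theta}^*\|^2)$; multiplying by the interval mass $\O(\|\bm{\theta}-\bm{\theta}^*\|)$ makes its expectation $\O(\|\bm{\theta}-\bm{\theta}^*\|^3)$ and, since its own variance is $\O(n^{-1}\|\bm{\theta}-\bm{\theta}^*\|^5)$, the whole piece is $\oP(\|\bm{\theta}-\bm{\theta}^*\|^2)$. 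For the second, boundedness of $\V[e\mid x]$ gives variance $\O(n^{-1}\|\bm{\theta}-\bm{\theta}^*\|^3)$ — one factor $\|\bm{\theta}-\bm{\theta}^*\|$ from the interval mass and $\|\bm{\theta}-\bm{\theta}^*\|^2$ from $(u-v)^2$ — hence this piece is $\oP(n^{-1/2}\|\bm{\theta}-\bm{\theta}^*\|)$. Summing the finitely many $k$ produces the two claimed orders.

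The hard part will be upgrading these pointwise bounds to the genuinely uniform $\oP$ statement that is needed when $\hat{\bm{\theta}},\hat{\bm{\tau}}$ are substituted later, because the indicator's endpoints depend on the random $\bm{\tau}$. I would handle this with a localized maximal inequality and a peeling argument over dyadic shells $\|\bm{\theta}-\bm{\theta}^*\| \asymp 2^{-j}$ under the constraint $\|\bm{\tau}-\bm{\tau}^*\| = \O(\|\bm{\theta}-\bm{\theta}^*\|)$, controlling the modulus of continuity of the empirical process indexed by the indicator-weighted functions over a shrinking neighborhood — in the spirit of the chaining used in \cite{KimK08} and of the bracketing-entropy bound already invoked through Lemma 19.34 of \cite{Van00} in the proof of Theorem \ref{thm1}.
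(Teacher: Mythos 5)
Your proof is correct and takes essentially the same approach as the paper: the decisive ingredients --- continuity at $\tau^{[k]*}$ giving $\O(\|\bm{\theta}-\bm{\theta}^*\|)$ control of the regression-function differences on $(\tau^{[k]*},\tau^{[k]}]$, the $\O(\|\bm{\theta}-\bm{\theta}^*\|)$ probability mass of that interval from Assumption \ref{asm1}, and a split into a quadratic drift plus a mean-zero term handled by a variance/maximal-inequality argument via Lemma 19.34 of \cite{Van00} --- are exactly those used in the paper. The only cosmetic difference is that you cancel the two $G$ terms first and work with $F(u,v)+e(u-v)$, whereas the paper bounds each $G$ term separately through the second-order expansion \eqref{forlem2}; both yield the same two orders.
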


\begin{proof}
Let us evaluate
\begin{align*}
& G(\bm{\theta}^{[k]},\bm{\theta}^{[k+1]*},(\tau^{[k]*},\tau^{[k]}])
\\
& = \frac{1}{n}\sum_{k=1}^{m+1}\sum_{i=1}^nI_{(\tau^{[k]*},\tau^{[k]}]}(x_i)(\phi(x_i;\bm{\theta}^{[k]})-\phi(x_i;\bm{\theta}^{[k+1]*}))(y_i-a'(\phi(x_i;\bm{\theta}^{[k+1]*})))
\\
& \ \phantom{=} - \frac{1}{2n}\sum_{k=1}^{m+1}\sum_{i=1}^nI_{(\tau^{[k]*},\tau^{[k]}]}(x_i)(\phi(x_i;\bm{\theta}^{[k]})-\phi(x_i;\bm{\theta}^{[k+1]*}))^2a''(\tilde{\phi}(x_i;\bm{\theta}^{[k]},\bm{\theta}^{[k+1]*})),
\end{align*}
which is obtained from \eqref{forlem2} in Appendix.
First, regarding the first term, the continuity of the regression function at the change-points indicates that 
\begin{align}
& |\phi(x_i;\bm{\theta}^{[k]})-\phi(x_i;\bm{\theta}^{[k+1]*})|
\notag \\
& \le |\phi(\tau^{[k]*};\bm{\theta}^{[k]})-\phi(\tau^{[k]*};\bm{\theta}^{[k]*})| + |\phi(x_i;\bm{\theta}^{[k]})-\phi(\tau^{[k]*};\bm{\theta}^{[k]})|
\notag \\
& \ \phantom{\le} + |\phi(x_i;\bm{\theta}^{[k+1]*})-\phi(\tau^{[k]*};\bm{\theta}^{[k+1]*})|
\notag \\
& = \O(\|\bm{\theta}^{[k]}-\bm{\theta}^{[k]*}\|) + \O(|\tau^{[k]}-\tau^{[k]*}|) + \O(|\tau^{[k]}-\tau^{[k]*}|)
\notag \\
& = \O(\|\bm{\theta}-\bm{\theta}^*\|)
\label{fromconti}
\end{align}
when $x_i\in(\tau^{[k]*},\tau^{[k]}]$.
In addition, the expectation of 
\begin{align*}
\frac{1}{\sqrt{n}}\sum_{i=1}^nI_{(\tau^{[k]*},\tau^{[k]}]}(x_i)\frac{\phi(x_i;\bm{\theta}^{[k]})-\phi(x_i;\bm{\theta}^{[k+1]*})}{\|\bm{\theta}-\bm{\theta}^*\|}(y_i-a'(\phi(x_i,\bm{\theta}^{[k+1]*})))
\end{align*}
is obviously $0$.
Therefore, using $\tau^{[k]}-\tau^{[k]*}=\o(1)$ and \eqref{fromconti}, Lemma 19.34 of \cite{Van00} implies that this is uniformly $\oP(1)$.
As a result, we obtain
\begin{align*}
& \frac{1}{n}\sum_{k=1}^{m+1}\sum_{i=1}^nI_{(\tau^{[k]*},\tau^{[k]}]}(x_i)(\phi(x_i;\bm{\theta}^{[k]})-\phi(x_i;\bm{\theta}^{[k+1]*}))(y_i-a'(\phi(x_i,\bm{\theta}^{[k+1]*})))
\\
& = \oP\bigg(\frac{1}{\sqrt{n}}\|\bm{\theta}-\bm{\theta}^*\|\bigg). 
\end{align*}
Next, regarding the second term, the continuity of the regression function at the change-points indicates that 
\begin{align}
& |\phi(x_i;\bm{\theta}^{[k]})-\phi(x_i;\bm{\theta}^{[k+1]*})|^2
\notag \\
& \le (|\phi(\tau^{[k]*};\bm{\theta}^{[k]})-\phi(\tau^{[k]*};\bm{\theta}^{[k]*})| + |\phi(x_i;\bm{\theta}^{[k]})-\phi(\tau^{[k]*};\bm{\theta}^{[k]})|
\notag \\
& \ \phantom{\le} + |\phi(x_i;\bm{\theta}^{[k+1]*})-\phi(\tau^{[k]*};\bm{\theta}^{[k+1]*})|)^2
\notag \\
& = \O(\|\bm{\theta}-\bm{\theta}^*\|^2)
\label{fromconti2}
\end{align}
when $x_i\in(\tau^{[k]*},\tau^{[k]}]$.
Therefore, it follows from $\tau^{[k]}-\tau^{[k]*}=\o(1)$ and \eqref{fromconti2} that 
\begin{align*}
\frac{1}{n}\sum_{i=1}^nI_{(\tau^{[k]*},\tau^{[k]}]}(x_i)\frac{(\phi(x_i;\bm{\theta}^{[k]})-\phi(x_i;\bm{\theta}^{[k+1]*}))^2}{\|\bm{\theta}-\bm{\theta}^*\|^2}a''(\tilde{\phi}(x_i;\bm{\theta}^{[k]},\bm{\theta}^{[k+1]*})) = \oP(1).
\end{align*}
As a result, we obtain
\begin{align*}
\frac{1}{n}\sum_{k=1}^{m+1}\sum_{i=1}^nI_{(\tau^{[k]*},\tau^{[k]}]}(x_i)(\phi(x_i;\bm{\theta}^{[k]})-\phi(x_i;\bm{\theta}^{[k+1]*}))^2a''(\tilde{\phi}(x_i;\bm{\theta}^{[k]},\bm{\theta}^{[k+1]*})) = \oP(\|\bm{\theta}-\bm{\theta}^*\|^2).
\end{align*}
Thus, it holds that 
\begin{align*}
G(\bm{\theta}^{[k]},\bm{\theta}^{[k+1]*},(\tau^{[k]*},\tau^{[k]}]) = \oP(\|\bm{\theta}-\bm{\theta}^*\|^2) + \oP\bigg(\frac{1}{\sqrt{n}}\|\bm{\theta}-\bm{\theta}^*\|\bigg).
\end{align*}
A similar evaluation can be obtained for 
$G(\bm{\theta}_{k+1},\bm{\theta}^{[k+1]*},(\tau^{[k]*},\tau^{[k]}])$, which completes the proof.
\end{proof}

\noindent
After obtaining the asymptotic expression for the change-point estimator, combining Lemmas \ref{lem1} and \ref{lem2} yields the following theorem.

\begin{thm}
$\sqrt{n}(\hat{\bm{\theta}}-\bm{\theta}^*)$ and $\sqrt{n}(\hat{\bm{\tau}}-\bm{\tau}^*)$ converge in distribution to $\N(\bm{0}_{p(m+1)},\bm{T}^{-1})$ and $\N(\bm{0}_{m},\bm{A}^{\T}\bm{T}^{-1}\bm{A})$, respectively.
\label{thm2}
\end{thm}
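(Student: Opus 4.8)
The plan is to exploit the decomposition $n^{-1}\sum_{i=1}^n g(\bm{\xi};x_i,y_i) = H(\bm{\theta}) + R(\bm{\theta},\bm{\tau})$ together with the quadratic expansion of $H$ in Lemma \ref{lem1} and the negligibility of $R$ in Lemma \ref{lem2}. In the continuous model the change-points are not genuinely free: each continuity constraint $\phi(\tau^{[k]};\bm{\theta}^{[k]})=\phi(\tau^{[k]};\bm{\theta}^{[k+1]})$ determines $\tau^{[k]}$ locally as a smooth function of $(\bm{\theta}^{[k]},\bm{\theta}^{[k+1]})$, because in the continuous case Assumption \ref{asm1} forces the derivative jump $d_k \equiv \frac{\partial}{\partial x}\phi(\tau^{[k]*};\bm{\theta}^{[k]*}) - \frac{\partial}{\partial x}\phi(\tau^{[k]*};\bm{\theta}^{[k+1]*})$ to be nonzero, so the implicit function theorem applies. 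Along this constraint manifold one has $\|\bm{\tau}-\bm{\tau}^*\| = \O(\|\bm{\theta}-\bm{\theta}^*\|)$, so the hypothesis of Lemma \ref{lem2} is automatically satisfied and the objective is, to leading order, the function $H(\bm{\theta})$ of $\bm{\theta}$ alone.

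First I would establish the rate $\|\hat{\bm{\theta}}-\bm{\theta}^*\| = \OP(n^{-1/2})$. Since $\hat{\bm{\xi}}$ maximizes the objective while its value at $\bm{\xi}^*$ equals $0$, combining Lemmas \ref{lem1} and \ref{lem2} gives $0 \le H(\hat{\bm{\theta}})+R(\hat{\bm{\theta}},\hat{\bm{\tau}}) = -\frac12(\hat{\bm{\theta}}-\bm{\theta}^*)^\T\{\bm{T}+\oP(1)\}(\hat{\bm{\theta}}-\bm{\theta}^*) + n^{-1/2}(\hat{\bm{\theta}}-\bm{\theta}^*)^\T\{\bm{w}+\oP(1)\} + \oP(\|\hat{\bm{\theta}}-\bm{\theta}^*\|^2)+\oP(n^{-1/2}\|\hat{\bm{\theta}}-\bm{\theta}^*\|)$; since $\bm{T}$ is positive definite this forces $\|\hat{\bm{\theta}}-\bm{\theta}^*\| = \OP(n^{-1/2})$, and the continuity relation then yields $\|\hat{\bm{\tau}}-\bm{\tau}^*\| = \OP(n^{-1/2})$ as well.

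Next, writing $\bm{u} = \sqrt{n}(\bm{\theta}-\bm{\theta}^*)$ and multiplying the objective by $n$, Lemmas \ref{lem1} and \ref{lem2} show that $n\{H(\bm{\theta})+R(\bm{\theta},\bm{\tau})\} = -\frac12\bm{u}^\T\bm{T}\bm{u} + \bm{u}^\T\bm{w} + \oP(1)$ uniformly for $\bm{u}$ in compact sets. The unique maximizer of the limiting quadratic $-\frac12\bm{u}^\T\bm{T}\bm{u}+\bm{u}^\T\bm{w}$ is $\bm{T}^{-1}\bm{w}$, so by the argmax continuous-mapping theorem (cf. \cite{Van00}) and $\bm{w}\sim\N(\bm{0}_{p(m+1)},\bm{T})$ we obtain $\sqrt{n}(\hat{\bm{\theta}}-\bm{\theta}^*) \stackrel{\mathrm{d}}{\to} \bm{T}^{-1}\bm{w}\sim\N(\bm{0}_{p(m+1)},\bm{T}^{-1})$. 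For the change-points I would Taylor-expand each continuity constraint at $\bm{\xi}^*$ and solve for $\hat{\tau}^{[k]}$, obtaining $\sqrt{n}(\hat{\bm{\tau}}-\bm{\tau}^*) = \bm{A}^\T\sqrt{n}(\hat{\bm{\theta}}-\bm{\theta}^*) + \oP(1)$, where $\bm{A}$ is the $p(m+1)\times m$ matrix whose $k$-th column carries $-d_k^{-1}\frac{\partial}{\partial\bm{\theta}}\phi(\tau^{[k]*};\bm{\theta}^{[k]*})$ in the $\bm{\theta}^{[k]}$-block and $d_k^{-1}\frac{\partial}{\partial\bm{\theta}}\phi(\tau^{[k]*};\bm{\theta}^{[k+1]*})$ in the $\bm{\theta}^{[k+1]}$-block. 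The continuous-mapping theorem then gives $\sqrt{n}(\hat{\bm{\tau}}-\bm{\tau}^*)\stackrel{\mathrm{d}}{\to}\bm{A}^\T\bm{T}^{-1}\bm{w}\sim\N(\bm{0}_m,\bm{A}^\T\bm{T}^{-1}\bm{A})$, since $\bm{A}^\T\bm{T}^{-1}\bm{T}\bm{T}^{-1}\bm{A}=\bm{A}^\T\bm{T}^{-1}\bm{A}$.

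The main obstacle is the rigorous derivation of the linear representation $\sqrt{n}(\hat{\bm{\tau}}-\bm{\tau}^*) = \bm{A}^\T\sqrt{n}(\hat{\bm{\theta}}-\bm{\theta}^*)+\oP(1)$ and the precise identification of $\bm{A}$: one must verify that the estimated change-points satisfy the continuity constraints up to $\oP(n^{-1/2})$ and that the implicit function theorem can be applied uniformly over a shrinking neighborhood of $\bm{\xi}^*$, which is exactly where the nondegeneracy $d_k\neq 0$ supplied by Assumption \ref{asm1} in the continuous case is essential. A secondary technical point is upgrading the pointwise expansions of Lemmas \ref{lem1} and \ref{lem2} to hold uniformly in $\bm{u}$ over compact sets, so that the argmax theorem is legitimately applicable; this follows from the stochastic equicontinuity already invoked in those lemmas, but should be stated explicitly.
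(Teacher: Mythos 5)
Your proposal is correct and follows essentially the same route as the paper: linearize the continuity constraint $\phi(\hat{\tau}^{[k]};\hat{\bm{\theta}}^{[k]})=\phi(\hat{\tau}^{[k]};\hat{\bm{\theta}}^{[k+1]})$ around $\bm{\xi}^*$ (using the nonzero derivative jump from Assumption \ref{asm1}) to get $\hat{\bm{\tau}}-\bm{\tau}^*=\bm{A}^{\T}(\hat{\bm{\theta}}-\bm{\theta}^*)(1+\oP(1))$, then combine Lemmas \ref{lem1} and \ref{lem2} to reduce the objective to a quadratic in $\bm{\theta}$ alone. The only cosmetic difference is that you invoke the argmax continuous-mapping theorem where the paper completes the square explicitly to obtain $\sqrt{n}(\hat{\bm{\theta}}-\bm{\theta}^*)=\bm{T}^{-1}\bm{w}+\oP(1)$.
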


\begin{proof}
Since the regression function $\sum_{k=1}^{m+1}I_{(\tau^{[k-1]},\tau^{[k]}]}(x)\phi(x;\bm{\theta}^{[k]})$ is continuous at the change-points $\tau^{[1]},\tau^{[2]},\ldots,\tau^{[m]}$, we have $\phi(\hat{\tau}^{[k]};\hat{\bm{\theta}}^{[k]})=\phi(\hat{\tau}^{[k]};\hat{\bm{\theta}}^{[k+1]})$ for each $k\in\{1,2,\ldots,\allowbreak m\}$.
Since it follows from Theorem \ref{thm1} that $\hat{\tau}^{[k]}-\tau^{[k]*}=\oP(1)$, we obtain
\begin{align*}
0 & = \phi(\tau^{[k]*};\hat{\bm{\theta}}^{[k+1]})-\phi(\tau^{[k]*};\hat{\bm{\theta}}^{[k]})
\\
& \ \phantom{=} +\bigg(\frac{\partial}{\partial x}\phi(\tau^{[k]*};\hat{\bm{\theta}}^{[k+1]})-\frac{\partial}{\partial x}\phi(\tau^{[k]*};\hat{\bm{\theta}}^{[k]})\bigg)(\hat{\tau}^{[k]}-\tau^{[k]*})(1+\oP(1)).
\end{align*}
Therefore, it holds that
\begin{align*}
& \hat{\tau}^{[k]}-\tau^{[k]*}
\\
& = -\frac{\phi(\tau^{[k]*};\hat{\bm{\theta}}^{[k+1]})-\phi(\tau^{[k]*};\hat{\bm{\theta}}^{[k]})}{\partial\phi(\tau^{[k]*};\hat{\bm{\theta}}^{[k+1]})/\partial x-\partial\phi(\tau^{[k]*};\hat{\bm{\theta}}^{[k]})/\partial x}(1+\oP(1))
\\
& = -\frac{(\partial\phi(\tau^{[k]*};\bm{\theta}^{[k+1]*})/\partial\bm{\theta})^{\T}(\hat{\bm{\theta}}^{[k+1]}-\bm{\theta}^{[k+1]*})-(\partial\phi(\tau^{[k]*};\bm{\theta}^{[k]*})/\partial\bm{\theta})^{\T}(\hat{\bm{\theta}}^{[k]}-\bm{\theta}^{[k]*})}{\partial\phi(\tau^{[k]*};\bm{\theta}^{[k+1]*})/\partial x-\partial\phi(\tau^{[k]*};\bm{\theta}^{[k]*})/\partial x}
\\
& \ \phantom{=} \times (1+\oP(1)).
\end{align*}
The second equality comes from $\hat{\bm{\theta}}^{[k]}-\bm{\theta}^{[k]*}=\oP(1)$ in Theorem \ref{thm1}.
Hence, letting
\begin{align*}
\bm{a}_{i,j}\equiv\left\{
\begin{array}{ll}
\displaystyle\bigg(\frac{\partial}{\partial x}\phi(\tau^{[k]*};\bm{\theta}^{[k+1]*})-\frac{\partial}{\partial x}\phi(\tau^{[k]*};\bm{\theta}^{[k]*})\bigg)^{-1}\frac{\partial}{\partial\bm{\theta}}\phi(\tau^{[k]*};\bm{\theta}^{[k]*}) & (i=j=k)
\medskip\\
\displaystyle\bigg(\frac{\partial}{\partial x}\phi(\tau^{[k]*};\bm{\theta}^{[k+1]*})-\frac{\partial}{\partial x}\phi(\tau^{[k]*};\bm{\theta}^{[k]*})\bigg)^{-1}\frac{\partial}{\partial\bm{\theta}}\phi(\tau^{[k]*};\bm{\theta}^{[k+1]*}) & (i-1=j=k)
\medskip\\
\bm{0}_p & ({\rm otherwise})
\end{array}
\right.
\end{align*}
and defining $\{(m+1)p\times m\}$-matrix $\bm{A}$ as $(\bm{a}_{i,j})_{1\le i\le m+1,1\le j\le m}$, it can be written that
\begin{align}
\hat{\bm{\tau}}-\bm{\tau}^*=\bm{A}^{\T}(\hat{\bm{\theta}}-\bm{\theta}^*)(1+\oP(1)).
\label{asymtau}
\end{align}
Note that, from Assumption \ref{asm1}, the term whose inverse is taken in the definition of $\bm{a}_{i,j}$ is not equal to $0$.

Now, from Lemmas \ref{lem1} and Lemma \ref{lem2}, it holds uniformly for $n$ and $\bm{\xi}\in\{\bm{\xi}:\|\bm{\theta}-\bm{\theta}^*\|=\o(1),\ \|\bm{\tau}-\bm{\tau}^*\|=\o(\|\bm{\theta}-\bm{\theta}^*\|)\}$ that
\begin{align}
& \frac{1}{n}\sum_{i=1}^ng(\bm{\xi};x_i,y_i) = H(\bm{\theta})+R(\bm{\theta},\bm{\tau})
\notag \\
& = -\frac{1}{2}(\bm{\theta}-\bm{\theta}^*)^\T(\bm{T}+\oP(1))(\bm{\theta}-\bm{\theta}^*)+\frac{1}{\sqrt{n}}(\bm{\theta}-\bm{\theta}^*)^\T(\bm{w}+\oP(1))
\notag \\
& \ \phantom{=} + \oP(\|\bm{\theta}-\bm{\theta}^*\|^2) + \oP\bigg(\frac{1}{\sqrt{n}}\|\bm{\theta}-\bm{\theta}^*\|\bigg).
\label{forthm3}
\end{align}
Since it follows from Theorem \ref{thm1} and \eqref{asymtau} that $\|\hat{\bm{\theta}}-\bm{\theta}^*\|=\oP(1)$ and $\|\hat{\bm{\tau}}-\bm{\tau}^*\|=\OP(\|\hat{\bm{\theta}}-\bm{\theta}^*\|)$, we obtain
\begin{align*}
& \frac{1}{n}\sum_{i=1}^ng(\hat{\bm{\xi}};x_i,y_i)
\notag \\
& = -\frac{1}{2}\bigg(\hat{\bm{\theta}}-\bm{\theta}^*-\frac{\bm{T}^{-1}\bm{w}+\oP(1)}{\sqrt{n}}\bigg)^{\T}(\bm{T}+\oP(1))\bigg(\hat{\bm{\theta}}-\bm{\theta}^*-\frac{\bm{T}^{-1}\bm{w}+\oP(1)}{\sqrt{n}}\bigg)
\notag \\
& \ \phantom{=} + \frac{\bm{w}^\T\bm{T}^{-1}\bm{w}}{2n}+\oP\bigg(\frac{1}{n}\bigg).
\end{align*}
Moreover, since it follows from $\hat{\bm{\xi}}=\argsup_{\bm{\xi}}\sum_{i=1}^ng(\bm{\xi};x_i,y_i)$ that
\begin{align*}
\bigg\|\hat{\bm{\theta}}-\bm{\theta}^*-\frac{\bm{T}^{-1}\bm{w}+\oP(1)}{\sqrt{n}}\bigg\|^2 = \oP\bigg(\frac{1}{n}\bigg),
\end{align*}
we obtain
\begin{align}
\sqrt{n}(\hat{\bm{\theta}}-\bm{\theta}^*)=\bm{T}^{-1}\bm{w}+\oP(1),
\label{asymtheta}
\end{align}
which yields the asymptotic distribution of $\hat{\bm{\theta}}$.
Also, we obtain from combining \eqref{asymtau} and \eqref{asymtheta} that 
\begin{align*}
\sqrt{n}(\hat{\bm{\tau}}-\bm{\tau}^*)=\bm{A}^{\T}\bm{T}^{-1}\bm{w}+\oP(1),
\end{align*}
which yields the asymptotic distribution of $\hat{\bm{\tau}}$.
\end{proof}

With the above preparation, let us consider
\begin{align*}
\text{AIC} = -2\sum_{i=1}^n\log f(x_i,y_i;\hat{\bm{\xi}}) + 2\E[c^{\rm limit}]
\end{align*}
according to the original definition of AIC.
Here, the second term represents the asymptotic bias when the Kullback-Leibler risk (exactly its double minus a constant) is estimated by $-2\sum_{i=1}^n\log f(x_i,y_i;\hat{\bm{\xi}})$.
In this paper, for simplicity, we use convergence in distribution, $\dlim_{n\to\infty}$, instead of convergence in mean, and evaluate the asymptotic bias by defining
\begin{align*}
c^{\rm limit} = \dlim_{n\to\infty}\bigg\{\sum_{i=1}^ng(\hat{\bm{\xi}};x_i,y_i)-\sum_{i=1}^ng(\hat{\bm{\xi}};\tilde{x}_i,\tilde{y}_i)\bigg\}.
\end{align*}
Here, $(\tilde{x}_i,\tilde{y}_i)$ is a copy of $(x_i,y_i)$, that is, a random variable that follows the same distribution independently of $(x_i,y_i)$.
From Theorem \ref{thm2}, the asymptotic bias is evaluated as follows.

\begin{thm}
Under Assumption \ref{asm1}, for the model in \eqref{model2}, where the regression function is continuous at the change-points, it holds that
\begin{align*}
\E[c^{\rm limit}] = p(m+1).
\end{align*}
\label{thm3}
\end{thm}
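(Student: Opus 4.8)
The plan is to substitute the quadratic expansion already obtained in the proof of Theorem \ref{thm2} into each of the two sums defining $c^{\rm limit}$ and then to take the expectation of the resulting quadratic form in Gaussian vectors. First I would record the in-sample term. As established at the end of the proof of Theorem \ref{thm2}, the quadratic piece centered at $\bm{T}^{-1}\bm{w}/\sqrt{n}$ is $\oP(1/n)$ by the maximizing property of $\hat{\bm{\theta}}$, so that
\begin{align*}
\sum_{i=1}^n g(\hat{\bm{\xi}};x_i,y_i) = \frac{1}{2}\bm{w}^\T\bm{T}^{-1}\bm{w} + \oP(1).
\end{align*}

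Next I would treat the out-of-sample term $\sum_{i=1}^n g(\hat{\bm{\xi}};\tilde{x}_i,\tilde{y}_i)$. Since $(\tilde{x}_i,\tilde{y}_i)$ are independent copies of $(x_i,y_i)$, the decomposition $n^{-1}\sum_i g(\bm{\xi};\tilde{x}_i,\tilde{y}_i)=\tilde{H}(\bm{\theta})+\tilde{R}(\bm{\theta},\bm{\tau})$ and the fresh-data analogues of Lemmas \ref{lem1} and \ref{lem2} hold by the same proofs, the only change being that $\bm{w}$ is replaced by an independent copy $\tilde{\bm{w}}\sim\N(\bm{0}_{p(m+1)},\bm{T})$; independence holds because $\bm{w}$ and $\tilde{\bm{w}}$ are weak limits of empirical sums over the two mutually independent samples. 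Hence, uniformly over the shrinking neighborhood,
\begin{align*}
\sum_{i=1}^n g(\bm{\xi};\tilde{x}_i,\tilde{y}_i) = -\frac{n}{2}(\bm{\theta}-\bm{\theta}^*)^\T(\bm{T}+\oP(1))(\bm{\theta}-\bm{\theta}^*) + \sqrt{n}(\bm{\theta}-\bm{\theta}^*)^\T(\tilde{\bm{w}}+\oP(1)) + \oP(1).
\end{align*}
I would then insert $\hat{\bm{\xi}}$, using $\sqrt{n}(\hat{\bm{\theta}}-\bm{\theta}^*)=\bm{T}^{-1}\bm{w}+\oP(1)$ from \eqref{asymtheta} and $\hat{\bm{\tau}}-\bm{\tau}^*=\bm{A}^\T(\hat{\bm{\theta}}-\bm{\theta}^*)(1+\oP(1))$ from \eqref{asymtau}, the latter ensuring via the fresh analogue of Lemma \ref{lem2} that the change-point contribution is negligible. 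This yields
\begin{align*}
\sum_{i=1}^n g(\hat{\bm{\xi}};\tilde{x}_i,\tilde{y}_i) = -\frac{1}{2}\bm{w}^\T\bm{T}^{-1}\bm{w} + \bm{w}^\T\bm{T}^{-1}\tilde{\bm{w}} + \oP(1).
\end{align*}

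Subtracting and passing to the distributional limit gives $c^{\rm limit}=\bm{w}^\T\bm{T}^{-1}\bm{w}-\bm{w}^\T\bm{T}^{-1}\tilde{\bm{w}}$. Taking expectations, $\E[\bm{w}^\T\bm{T}^{-1}\bm{w}]=\tr(\bm{T}^{-1}\E[\bm{w}\bm{w}^\T])=\tr(\bm{T}^{-1}\bm{T})=p(m+1)$, whereas $\E[\bm{w}^\T\bm{T}^{-1}\tilde{\bm{w}}]=\tr(\bm{T}^{-1}\E[\tilde{\bm{w}}\bm{w}^\T])=0$ because $\tilde{\bm{w}}$ and $\bm{w}$ are independent with mean zero, whence $\E[c^{\rm limit}]=p(m+1)$. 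The main obstacle is the second step: justifying that substituting the training-data estimator $\hat{\bm{\xi}}$ into the fresh-data criterion keeps $\tilde{\bm{w}}$ independent of $\bm{w}$ while the quadratic expansion remains valid, and in particular that the change-point directions contribute nothing at first order. This is precisely what encodes the effective dimension $p(m+1)$ and the absence of a separate penalty for $\bm{\tau}$: the continuity constraint, through \eqref{asymtau}, ties $\hat{\bm{\tau}}$ deterministically (asymptotically) to $\hat{\bm{\theta}}$, so no independent Gaussian fluctuation enters through $\bm{\tau}$. The replacement of convergence in mean by $\dlim$ is the authors' stated simplification and so need not be addressed.
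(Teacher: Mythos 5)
Your proposal is correct and follows essentially the same route as the paper's proof: both substitute the quadratic expansion from Theorem \ref{thm2} into the in-sample and fresh-sample log-likelihoods, replace $\bm{w}$ by an independent copy $\tilde{\bm{w}}$ for the latter, and reduce $c^{\rm limit}$ to $(\bm{w}-\tilde{\bm{w}})^{\T}\bm{T}^{-1}\bm{w}$ whose expectation is $\tr(\bm{T}^{-1}\bm{T})=p(m+1)$. Your additional remarks on why the change-point directions contribute nothing at first order (via \eqref{asymtau} and the fresh-data analogue of Lemma \ref{lem2}) are consistent with, and slightly more explicit than, the paper's appeal to \eqref{forthm3}.
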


\begin{proof}
Letting $\tilde{\bm{w}}$ be a copy of $\bm{w}$, it follows from \eqref{forthm3} and \eqref{asymtheta} that
\begin{align*}
& \sum_{i=1}^ng(\hat{\bm{\xi}};\tilde{x_i},\tilde{y_i})
\\
& = n\bigg\{-\frac{1}{2}(\hat{\bm{\theta}}-\bm{\theta}^*)^\T(\bm{T}+\oP(1))(\hat{\bm{\theta}}-\bm{\theta}^*)+\frac{1}{\sqrt{n}}(\hat{\bm{\theta}}-\bm{\theta}^*)^\T(\tilde{\bm{w}}+\oP(1))\bigg\}
\\
& = -\frac{1}{2}(\bm{T}^{-1}\bm{w}+\oP(1))^\T(\bm{T}+\oP(1))(\bm{T}^{-1}\bm{w}+\oP(1))+(\bm{T}^{-1}\bm{w}+\oP(1))^\T(\tilde{\bm{w}}+\oP(1))
\\
& = -\frac{1}{2}\bm{w}^\T\bm{T}^{-1}\bm{w}+\bm{w}^\T\bm{T}^{-1}\tilde{\bm{w}}+\oP(1).
\end{align*}
Similarly we obtain
\begin{align*}
\sum_{i=1}^ng(\hat{\bm{\xi}};x_i,y_i) = -\frac{1}{2}\bm{w}^\T\bm{T}^{-1}\bm{w}+\bm{w}^\T\bm{T}^{-1}\bm{w}+\oP(1) = \frac{1}{2}\bm{w}^\T\bm{T}^{-1}\bm{w}+\oP(1),
\end{align*}
and these combination implies
\begin{align*}
2\bigg(\sum_{i=1}^ng(\hat{\bm{\xi}};x_i,y_i)-\sum_{i=1}^ng(\hat{\bm{\xi}};\tilde{x}_i,\tilde{y}_i)\bigg) = 2(\bm{w}-\tilde{\bm{w}})^{\T}\bm{T}^{-1}\bm{w}+\oP(1). 
\end{align*}
Since $\E[(\bm{w}-\tilde{\bm{w}})^\T\bm{T}^{-1}\bm{w}]=p(m+1)$, the proof is complete.
\end{proof}

\noindent
From this result, we propose to use
\begin{align}
\text{AIC} = -2\sum_{i=1}^n\log f(x_i,y_i;\hat{\bm{\xi}}) + 2p(m+1)
\label{aic1}
\end{align}
as AIC for the segmented regression model when the regression function is continuous at the change-points.
This means that the penalties do not change depending on the parameters, that is, the penalty for the change-point parameter is also $2$.

\section{AIC for discontinuous segmented regression models}\label{sec4}

In this section, we consider the case where the regression function $\sum_{k=1}^{m+1}I_{(\tau^{[k-1]},\tau^{[k]}]}(x)\allowbreak\phi(x;\bm{\theta}^{[k]})$ is discontinuous at the change-points $\tau^{[1]},\tau^{[2]},\ldots,\tau^{[m]}$.
Let us denote the maximum likelihood estimator of $\bm{\theta}$ when the change-point is fixed at $\bm{\tau}$ as $\hat{\bm{\theta}}_{\bm{\tau}}=(\hat{\bm{\theta}}_{\bm{\tau}}^{[1]\T},\hat{\bm{\theta}}_{\bm{\tau}}^{[2]\T},\ldots,\allowbreak\hat{\bm{\theta}}_{\bm{\tau}}^{[m+1]\T})^\T$, i.e. $\hat{\bm{\theta}}_{\bm{\tau}}=\argsup_{\bm{\theta}}(\sum_{i=1}^n g(\bm{\theta},\bm{\tau};x_i,y_i))$, and $g(\hat{\bm{\theta}}_{\bm{\tau}},\bm{\tau};x_i,y_i)$ as $\hat{g}(\bm{\tau};x_i,y_i)$. 
When $x_i$ is determined to be in $(\tau^{[k-1]},\tau^{[k]}]$, $g(\bm{\theta},\bm{\tau};x_i,y_i)$ depends only on $\bm{\theta}^{[k]}$; in this case it is written as $g(\bm{\theta}^{[k]};x_i,y_i)$.
Since the regression function is not assumed to be continuous at the change-points, $\hat{\bm{\theta}}_{\bm{\theta}}^{[k]}$ is $\bm{\theta}^{[k]}$ that maximizes $\sum_{i:\tau^{[k-1]}<x_i\le\tau^{[k]}}g(\bm{\theta}^{[k]};x_i,y_i)$. 
This is the maximum likelihood estimator when a generalized linear regression model is fitted to a set of $(x_i,y_i)$ with $\tau^{[k-1]}<x_i\le\tau^{[k]}$, and therefore $\hat{\bm{\theta}}_{\bm{\tau}}^{[k]}$ at $\bm{\tau}=\bm{\tau}^*$, i.e. $\hat{\bm{\theta}}_{\bm{\tau}^*}^{[k]}$, is consistent.
In this notation, the weak limit considered in the derivation of AIC is
\begin{align*}
c^{\rm limit} = \dlim_{n\to\infty}\bigg(\sup_{\bm{\tau}}\sum_{i=1}^n\hat{g}(\bm{\tau};x_i,y_i)-\sum_{i^{\dagger}=1}^n\hat{g}\bigg(\argsup_{\bm{\tau}}\sum_{i=1}^n\hat{g}(\bm{\tau};x_i,y_i);\tilde{x}_{i^{\dagger}},\tilde{y}_{i^{\dagger}}\bigg)\bigg).
\end{align*}
Here, we take $\sup_{\bm{\tau}}$ and $\argsup_{\bm{\tau}}$ in the set of $\bm{\tau}$ such that $\sum_{i=1}^n\hat{g}(\bm{\tau};x_i,y_i)$ is $\OP(1)$ or positive.

To investigate the asymptotic properties of $\hat{\bm{\tau}}=\argsup_{\bm{\tau}}\sum_{i=1}^n\hat{g}(\bm{\tau};x_i,y_i)$, we separately consider the log-likelihood for the cases $\bm{\tau}-\bm{\tau}^*=\O(n^{-1})$ and $\bm{\tau}-\bm{\tau}^*\neq\O(n^{-1})$ and $\bm{\tau}-\bm{\tau}^*=\O(n^{-1})$.
Here, $\bm{\tau}-\bm{\tau}^*=\O(n^{-1})$ means that $\tau^{[k]}-\tau^{[k]*}=\O(n^{-1})$ for all $k$, and $\bm{\tau}-\bm{\tau}^*\neq\O(n^{-1})$ means that $\tau^{[k]}-\tau^{[k]*}\neq\O(n^{-1})$ for some $k$.
When $\bm{\tau}-\bm{\tau}^*=\O(n^{-1})$, the difference between $\sum_{i:\tau^{[k-1]}<x_i\le\tau^{[k]}}g(\bm{\theta}^{[k]};x_i,y_i)$ and $\sum_{i:\tau^{[k-1]*}<x_i\le\tau^{[k]*}}g(\bm{\theta}^{[k]};x_i,y_i)$ is small, $\hat{\bm{\theta}}_{\bm{\tau}}^{[k]}$ is obviously consistent, as well as $\hat{\bm{\theta}}_{\bm{\tau}^*}^{[k]}$, i.e. $\hat{\bm{\theta}}_{\bm{\tau}}^{[k]}\xrightarrow{\rm p}\bm{\theta}^{[k]*}$. 
Then, taking the Taylor expansion of $\sum_{i:\tau^{[k-1]}<x_i\le\tau^{[k]}}\allowbreak g'(\hat{\bm{\theta}}_{\bm{\tau}}^{[k]};x_i,y_i)=\bm{0}_p$ around $\hat{\bm{\theta}}_{\bm{\tau}}^{[k]}=\bm{\theta}^{[k]*}$, we obtain
\begin{align*}
\bm{0}_p=\sum_{i:\tau^{[k-1]}<x_i\le\tau^{[k]}}g'(\bm{\theta}^{[k]*};x_i,y_i)+\bigg(\sum_{i:\tau^{[k-1]}<x_i\le\tau^{[k]}}g''(\bm{\theta}^{[k]*};x_i,y_i)\bigg)(\hat{\bm{\theta}}_{\bm{\tau}}^{[k]}-\bm{\theta}^{[k]*})(1+\oP(1)),
\end{align*}
where where $g'(\bm{\theta};x_i,y_i)$ and $g''(\bm{\theta};x_i,y_i)$ are the first-order derivative vector and second-order derivative matrix of $g(\bm{\theta};x_i,y_i)$ with respect to $\bm{\theta}$.
Therefore, it can also be written that
\begin{align*}
\hat{\bm{\theta}}_{\bm{\tau}}^{[k]}=\bm{\theta}^{[k]*}-\bigg(\sum_{i:\tau^{[k-1]}<x_i\le\tau^{[k]}}g''(\bm{\theta}^{[k]*};x_i,y_i)\bigg)^{-1}\bigg(\sum_{i:\tau^{[k-1]}<x_i\le\tau^{[k]}}g'(\bm{\theta}^{[k]*};x_i,y_i)\bigg)(1+\oP(1)),
\end{align*}
and we obtain
\begin{align}
& \hat{\bm{\theta}}_{\bm{\tau}}^{[k]}-\hat{\bm{\theta}}_{\bm{\tau}^*}^{[k]}
\notag \\
& = \bigg(\sum_{i:\tau^{[k-1]*}<x_i\le\tau^{[k]*}}g''(\bm{\theta}^{[k]*};x_i,y_i)\bigg)^{-1}\bigg(\sum_{i:\tau^{[k-1]*}<x_i\le\tau^{[k]*}}g'(\bm{\theta}^{[k]*};x_i,y_i)\bigg)(1+\oP(1))
\notag \\
& \ \phantom{=} -\bigg(\sum_{i:\tau^{[k-1]}<x_i\le\tau^{[k]}}g''(\bm{\theta}^{[k]*};x_i,y_i)\bigg)^{-1}\bigg(\sum_{i:\tau^{[k-1]}<x_i\le\tau^{[k]}}g'(\bm{\theta}^{[k]*};x_i,y_i)\bigg)(1+\oP(1))
\notag \\
& = \bigg(\sum_{i:\tau^{[k-1]*}<x_i\le\tau^{[k]*}}g''(\bm{\theta}^{[k]*};x_i,y_i)\bigg)^{-1}\OP(1)(1+\oP(1)) = \OP(n^{-1}).
\label{keyasym}
\end{align}
In addition, since we know $\hat{\bm{\theta}}_{\bm{\tau}^*}^{[k]}-\bm{\theta}^{[k]*}=\OP(n^{-1/2})$ as a fundamental property of the maximum likelihood estimator for generalized linear models, it holds $\hat{\bm{\theta}}_{\bm{\tau}}^{[k]}-\bm{\theta}^{[k]*}=\OP(n^{-1/2})$. 
Hence, by taking the Taylor expansion of the maximum log-likelihood with $\bm{\tau}$ fixed and applying \eqref{keyasym}, we obtain
\begin{align*}
& \sum_{i=1}^ng(\hat{\bm{\theta}}_{\bm{\tau}},\bm{\tau};x_i,y_i)-\sum_{i=1}^ng(\bm{\theta}^*,\bm{\tau};x_i,y_i)
\\
& = -\frac{1}{2}\sum_{k=1}^{m+1}\sum_{i:\tau^{[k-1]}<x_i\le\tau^{[k]}}(\bm{\theta}^{[k]*}-\hat{\bm{\theta}}_{\bm{\tau}}^{[k]})^{\T}g''(\bm{\theta}^{[k]*};x_i,y_i)(\bm{\theta}^{[k]*}-\hat{\bm{\theta}}_{\bm{\tau}}^{[k]})+\oP(1)
\\
& = -\frac{1}{2}\sum_{k=1}^{m+1}\sum_{i:\tau^{[k-1]*}<x_i\le\tau^{[k]*}}(\bm{\theta}^{[k]*}-\hat{\bm{\theta}}_{\bm{\tau}^*}^{[k]})^{\T}g''(\bm{\theta}^{[k]*};x_i,y_i)(\bm{\theta}^{[k]*}-\hat{\bm{\theta}}_{\bm{\tau}^*}^{[k]})+\oP(1)
\\
& = \sum_{i=1}^ng(\hat{\bm{\theta}}_{\bm{\tau}^*},\bm{\tau}^*;x_i,y_i)+\oP(1).
\end{align*}
Then, defining a two-sided random walk with negative drift as
\begin{align*}
Q^{[k]}(\tau^{[k]})
& \equiv I_{\{\tau^{[k]}<\tau^{[k]*}\}}\sum_{i:\tau^{[k]}<x_i\le\tau^{[k]*}}(g(\bm{\theta}^{[k+1]*};x_i,y_i)-g(\bm{\theta}^{[k]*};x_i,y_i))
\\
& \ \phantom{\equiv} + I_{\{\tau^{[k]*}<\tau^{[k]}\}}\sum_{i:\tau^{[k]*}<x_i\le\tau^{[k]}}(g(\bm{\theta}^{[k]*};x_i,y_i)-g(\bm{\theta}^{[k+1]*};x_i,y_i)),
\end{align*}
it can be seen that 
\begin{align}
& \sum_{i=1}^ng(\hat{\bm{\theta}}_{\bm{\tau}},\bm{\tau};x_i,y_i)-\sum_{i=1}^ng(\hat{\bm{\theta}}_{\bm{\tau}^*},\bm{\tau}^*;x_i,y_i)
\notag \\
& = \sum_{i=1}^ng(\bm{\theta}^*,\bm{\tau};x_i,y_i)+\oP(1) = \sum_{k=1}^mQ^{[k]}(\tau^{[k]})+\oP(1) = \OP(1).
\label{order1}
\end{align}
Using asymptotics for regular statistical models, it can also be seen that
\begin{align}
& \sum_{i=1}^ng(\hat{\bm{\theta}}_{\bm{\tau}^*},\bm{\tau}^*;x_i,y_i)
\notag \\
& = \frac{1}{2}\sum_{k=1}^{m+1}\bigg(\sum_{i:\tau^{[k-1]*}<x_i\le\tau^{[k]*}}g'(\bm{\theta}^{[k]*};x_i,y_i)\bigg)^{\T}\bigg(-\sum_{i:\tau^{[k-1]*}<x_i\le\tau^{[k]*}}g''(\bm{\theta}^{[k]*};x_i,y_i)\bigg)^{-1}
\notag \\
& \ \phantom{= \frac{1}{2}\sum_{k=1}^{m+1}} \times \bigg(\sum_{i:\tau^{[k-1]*}<x_i\le\tau^{[k]*}}g'(\bm{\theta}^{[k]*};x_i,y_i)\bigg)+\oP(1)
\notag \\
& = \frac{1}{2}\sum_{k=1}^{m+1}\bm{u}^{[k]\T}\bm{u}^{[k]}+\oP(1)=\OP(1),
\label{order2}
\end{align}
where $\bm{u}^{[k]}$ is a random vector following a multi-dimensional Gaussian distribution $\N(\bm{0}_p,\allowbreak\bm{I}_{p\times p})$ independently.
From \eqref{order1} and \eqref{order2}, when $\bm{\tau}-\bm{\tau}^*=\O(n^{-1})$, we have $\sum_{i=1}^ng(\hat{\bm{\theta}}_{\bm{\tau}},\bm{\tau};x_i,\allowbreak y_i)=\OP(1)$.
On the other hand, when $\bm{\tau}-\bm{\tau}^*\neq\O(n^{-1})$, we can show that 
\begin{align}
\P\bigg(\sum_{i=1}^ng(\hat{\bm{\theta}}_{\bm{\tau}},\bm{\tau};x_i,y_i)>-M\bigg)\to 0 
\label{gtoinf}
\end{align}
for any $M>0$ (see Appendix).
Thus, we take $\sup_{\bm{\tau}}$ and $\argsup_{\bm{\tau}}$ in the set of $\bm{\tau}$ such that $\bm{\tau}-\bm{\tau}^*=\O(n^{-1})$.
From \eqref{order1} and \eqref{order2}, it holds
\begin{align}
\sup_{\bm{\tau}}\sum_{i=1}^ng(\hat{\bm{\theta}}_{\bm{\tau}},\bm{\tau};x_i,y_i) = \sum_{k=1}^m\sup_{\tau^{[k]}}Q^{[k]}(\tau^{[k]})+\frac{1}{2}\sum_{k=1}^{m+1}\bm{u}^{[k]\T}\bm{u}^{[k]}+\oP(1),
\label{supres1}
\end{align}
and furthermore,
\begin{align}
\argsup_{\bm{\tau}}\sum_{i=1}^n\hat{g}(\bm{\tau};x_i,y_i) = \check{\bm{\tau}}+\oP(1),
\label{supres2}
\end{align}
where $\check{\tau}^{[k]}=\argsup_{\tau^{[k]}}Q^{[k]}(\tau^{[k]})$ and $\check{\bm{\tau}}=(\check{\tau}^{[1]},\check{\tau}^{[2]},\ldots,\check{\tau}^{[m]})^{\T}$.
Then, the asymptotic bias of the Kullback-Leibler risk can be expressed using the two-sided random walk as follows.
Although the proof is similar to that of \cite{Nin15}, the definition of the asymptotic bias is given by not using a copy of $\hat{\bm{\xi}}$, which reduces the complexity of the expression (see Appendix).

\begin{thm}
Under Assumption \ref{asm1}, for the model in \eqref{model2}, where the regression function is discontinuous at the change-points, it holds that
\begin{align*}
\E[c^{\rm limit}] = \sum_{k=1}^m\E\bigg[\sup_{\tau}Q^{[k]}(\tau)-\tilde{Q}^{[k]}\bigg(\argsup_{\tau}Q^{[k]}(\tau)\bigg)\bigg] + p(m+1),
\end{align*}
where $\tilde{Q}^{[k]}(\cdot)$ is a copy of $Q^{[k]}(\cdot)$.
\label{thm4}
\end{thm}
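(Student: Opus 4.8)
The plan is to evaluate $c^{\rm limit}$ by separating its two terms: the in-sample maximized quantity, whose weak limit is already supplied by \eqref{supres1}, and the out-of-sample quantity evaluated on the copy, and then to show that their difference splits cleanly into one two-sided random walk per change-point plus one chi-square vector per segment. Writing $\hat{\bm{\tau}}=\argsup_{\bm{\tau}}\sum_{i=1}^n\hat{g}(\bm{\tau};x_i,y_i)$, equation \eqref{supres1} already gives that $\sup_{\bm{\tau}}\sum_{i=1}^n\hat{g}(\bm{\tau};x_i,y_i)$ converges in distribution to $\sum_{k=1}^m\sup_{\tau}Q^{[k]}(\tau)+\frac{1}{2}\sum_{k=1}^{m+1}\bm{u}^{[k]\T}\bm{u}^{[k]}$, so the task reduces to finding the weak limit of the copy term $\sum_{i^\dagger=1}^n g(\hat{\bm{\theta}}_{\hat{\bm{\tau}}},\hat{\bm{\tau}};\tilde{x}_{i^\dagger},\tilde{y}_{i^\dagger})$. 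I would decompose this into
\[
\sum_{i^\dagger=1}^n g(\hat{\bm{\theta}}_{\hat{\bm{\tau}}},\bm{\tau}^*;\tilde{x}_{i^\dagger},\tilde{y}_{i^\dagger}) + \sum_{i^\dagger=1}^n\{g(\hat{\bm{\theta}}_{\hat{\bm{\tau}}},\hat{\bm{\tau}};\tilde{x}_{i^\dagger},\tilde{y}_{i^\dagger})-g(\hat{\bm{\theta}}_{\hat{\bm{\tau}}},\bm{\tau}^*;\tilde{x}_{i^\dagger},\tilde{y}_{i^\dagger})\},
\]
a $\bm{\theta}$-part that uses the correct segmentation $\bm{\tau}^*$ and a $\bm{\tau}$-part collecting exactly the copy observations that switch segment between $\hat{\bm{\tau}}$ and $\bm{\tau}^*$.

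For the $\bm{\theta}$-part, since \eqref{keyasym} gives $\hat{\bm{\theta}}_{\hat{\bm{\tau}}}^{[k]}-\hat{\bm{\theta}}_{\bm{\tau}^*}^{[k]}=\OP(n^{-1})$ while $\hat{\bm{\theta}}_{\bm{\tau}^*}^{[k]}-\bm{\theta}^{[k]*}=\OP(n^{-1/2})$, I may replace $\hat{\bm{\theta}}_{\hat{\bm{\tau}}}$ by $\hat{\bm{\theta}}_{\bm{\tau}^*}$ at the cost of $\oP(1)$ and then Taylor-expand $\sum_{i^\dagger:\tau^{[k-1]*}<\tilde{x}_{i^\dagger}\le\tau^{[k]*}}g(\hat{\bm{\theta}}_{\bm{\tau}^*}^{[k]};\tilde{x}_{i^\dagger},\tilde{y}_{i^\dagger})$ around $\bm{\theta}^{[k]*}$. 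The zeroth-order term vanishes because $g(\bm{\theta}^{[k]*};\cdot)=0$ on segment $k$, and, using the regular-model expansion $\sqrt{n}(\hat{\bm{\theta}}_{\bm{\tau}^*}^{[k]}-\bm{\theta}^{[k]*})=\bm{J}^{[k]-1/2}\bm{u}^{[k]}+\oP(1)$ with $\bm{J}^{[k]}$ the per-observation Fisher information on segment $k$ and $\tilde{\bm{u}}^{[k]}$ the correspondingly standardized copy score (a copy of $\bm{u}^{[k]}$), the per-segment limit becomes $-\frac{1}{2}\bm{u}^{[k]\T}\bm{u}^{[k]}+\bm{u}^{[k]\T}\tilde{\bm{u}}^{[k]}$; this is the copy analogue of \eqref{order2}.

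For the $\bm{\tau}$-part, only copy observations lying strictly between $\tau^{[k]*}$ and $\hat{\tau}^{[k]}$ contribute, and on that set the summand is $g(\hat{\bm{\theta}}_{\hat{\bm{\tau}}}^{[k]};\cdot)-g(\hat{\bm{\theta}}_{\hat{\bm{\tau}}}^{[k+1]};\cdot)$. As this zone has width $\OP(n^{-1})$, it contains only $\OP(1)$ copy points, so replacing $\hat{\bm{\theta}}_{\hat{\bm{\tau}}}$ by $\bm{\theta}^*$ costs $\oP(1)$ and the $\bm{\tau}$-part for change-point $k$ becomes exactly $\tilde{Q}^{[k]}(\hat{\tau}^{[k]})$, the copy version of the two-sided random walk. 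By \eqref{supres2}, $\hat{\tau}^{[k]}=\check{\tau}^{[k]}+\oP(1)$ with $\check{\tau}^{[k]}=\argsup_{\tau}Q^{[k]}(\tau)$, and because the argmax of the original walk falls almost surely at a point where the independent copy walk $\tilde{Q}^{[k]}$ is continuous, a continuous-mapping argument yields $\tilde{Q}^{[k]}(\hat{\tau}^{[k]})\xrightarrow{\rm d}\tilde{Q}^{[k]}(\argsup_{\tau}Q^{[k]}(\tau))$. Establishing the required joint weak convergence of the pairs $(Q^{[k]},\tilde{Q}^{[k]})$ together with the score vectors $(\bm{u}^{[k]},\tilde{\bm{u}}^{[k]})$, and rigorously justifying the argmax functional, is the main obstacle; this is precisely where the construction of \cite{Nin15} is invoked, the asymptotic independence of the near-$\tau^{[k]*}$ increments driving $Q^{[k]}$ from the bulk scores driving $\bm{u}^{[k]}$ being the key structural fact that lets the two parts decouple.

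Finally, combining through the subtraction defining $c^{\rm limit}$, the $+\frac{1}{2}\sum_k\bm{u}^{[k]\T}\bm{u}^{[k]}$ from \eqref{supres1} and the $\bm{\theta}$-part's $-\frac{1}{2}\sum_k\bm{u}^{[k]\T}\bm{u}^{[k]}$ (which appears with a plus sign after subtraction) add to $\sum_k\bm{u}^{[k]\T}\bm{u}^{[k]}$, giving
\[
c^{\rm limit}=\sum_{k=1}^m\Big\{\sup_{\tau}Q^{[k]}(\tau)-\tilde{Q}^{[k]}\big(\argsup_{\tau}Q^{[k]}(\tau)\big)\Big\}+\sum_{k=1}^{m+1}\bm{u}^{[k]\T}\bm{u}^{[k]}-\sum_{k=1}^{m+1}\bm{u}^{[k]\T}\tilde{\bm{u}}^{[k]}.
\]
Taking expectations, the independence of the copy from the original data gives $\E[\bm{u}^{[k]\T}\tilde{\bm{u}}^{[k]}]=0$ and $\E[\bm{u}^{[k]\T}\bm{u}^{[k]}]=p$, so the last two sums contribute $p(m+1)$ and the stated expression follows.
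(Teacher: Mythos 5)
Your proposal is correct and follows essentially the same route as the paper's proof: the paper likewise combines \eqref{supres1} with a decomposition of the copy term into a $\bm{\theta}$-correction that limits to $\bm{u}^{[k]\T}\tilde{\bm{u}}^{[k]}-\tfrac{1}{2}\bm{u}^{[k]\T}\bm{u}^{[k]}$ (its \eqref{thm4for1}) and a segment-switching part that limits to $\tilde{Q}^{[k]}(\check{\tau}^{[k]})$ (its \eqref{thm4for2}), arriving at exactly your expression for $c^{\rm limit}$. The only cosmetic difference is the order of the splitting (you freeze $\bm{\tau}$ at $\bm{\tau}^*$ in the $\bm{\theta}$-part and the paper freezes $\bm{\theta}$ at $\bm{\theta}^*$ in the $\bm{\tau}$-part), which changes nothing in the limits.
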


\noindent
From this theorem, $\E[\sup_{\tau}Q^{[k]}(\tau)-\tilde{Q}^{[k]}(\argsup_{\tau}Q^{[k]}(\tau))]$ can be regarded as the asymptotic bias due to the change-point parameter $\tau^{[k]}$ and $p$ the asymptotic bias due to the non-change-point parameter $\bm{\theta}^{[k]}$.

Up to this point, we have treated asymptotics with a fixed amount of change $\bm{\theta}^{[k+1]*}-\bm{\theta}^{[k]*}$; however, the following is often assumed in change-point analysis (see, for example, \citealt{CsoH97}).

\begin{asm}
Letting $\alpha_n$ be some constant sequence satisfying $\O(1)\neq\alpha_n=\o(n)$ and $\bm{\Delta}_{\bm{\theta}^*}^{[k]}$ be some $p$-dimensional constant vector, it holds
\begin{align*}
\bm{\theta}^{[k+1]*}-\bm{\theta}^{[k]*} = \alpha_n^{-1/2}\bm{\Delta}_{\bm{\theta}^*}^{[k]} \qquad (k\in\{1,2,\ldots,m\}).
\end{align*}
\label{asm2}
\end{asm}

\noindent
In ordinary statistical analysis, an assumption like Assumption \ref{asm2} on the true values of different parameters does not change the asymptotic properties of estimators; in change-point analysis, however, it will change the asymptotic properties.
That is, whether or not to set Assumption \ref{asm2} is essential, and we will assume it in the following.
This is because setting it makes the limiting structure ambiguous as to whether or not there is change, and this is a more realistic asymptotics.
Since the actual data size is finite, the asymptotics adapted to the situation in which it is not completely clear whether or not there is a change in the true structure would clearly be more realistic.

Assumption \ref{asm2} is not only natural, but also allows us to explicitly express the asymptotic bias of the Kullback-Leibler risk as follows.
Although the proof method is similar to that in \cite{Nin15}, we provide it since the definition of asymptotic bias is given in a different form (see Appendix).

\begin{thm}
Under Assumptions \ref{asm1} and \ref{asm2}, for the model in \eqref{model2}, where the regression function is discontinuous at the change-points, it holds that
\begin{align*}
\E[c^{\rm limit}] = 3m + p(m+1).
\end{align*}
\label{thm5}
\end{thm}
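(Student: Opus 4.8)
The plan is to start from the representation established in Theorem \ref{thm4}, namely
\begin{align*}
\E[c^{\rm limit}] = \sum_{k=1}^m\E\bigg[\sup_{\tau}Q^{[k]}(\tau)-\tilde{Q}^{[k]}\bigg(\argsup_{\tau}Q^{[k]}(\tau)\bigg)\bigg] + p(m+1),
\end{align*}
and to evaluate each summand under the additional Assumption \ref{asm2}. The whole task thus reduces to showing that, for every $k$, the change-point contribution equals $3$; summing over the $m$ change-points then produces the $3m$ term, while $p(m+1)$ is inherited unchanged from Theorem \ref{thm4}.

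First I would determine the limiting law of the two-sided random walk $Q^{[k]}$ under Assumption \ref{asm2}. Writing the one-step increment on the right branch as $g(\bm{\theta}^{[k]*};x_i,y_i)-g(\bm{\theta}^{[k+1]*};x_i,y_i)$ for $x_i$ just above $\tau^{[k]*}$, a second-order expansion in $\phi(x_i;\bm{\theta}^{[k]*})-\phi(x_i;\bm{\theta}^{[k+1]*})=\O(\alpha_n^{-1/2})$ together with the exponential-family identity $\V[y\mid x]=a''(\phi)$ shows that the conditional mean of this increment is $-\frac{1}{2}\alpha_n^{-1}\nu^{[k]}\{1+\o(1)\}$ and its conditional variance is $\alpha_n^{-1}\nu^{[k]}\{1+\o(1)\}$, where $\nu^{[k]}=a''(\phi(\tau^{[k]*};\bm{\theta}^{[k]*}))\{(\partial\phi(\tau^{[k]*};\bm{\theta}^{[k]*})/\partial\bm{\theta})^\T\bm{\Delta}_{\bm{\theta}^*}^{[k]}\}^2$. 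The crucial structural fact is that the variance rate is exactly twice the negative drift rate. Reparametrising $\tau^{[k]}=\tau^{[k]*}+(\alpha_n/n)s$, so that the localisation window $\O(\alpha_n/n)$ replaces the $\O(n^{-1})$ window of the fixed-jump case, the number of observations up to $s$ is $q(\tau^{[k]*})\alpha_n s\{1+\oP(1)\}$, and a functional central limit theorem gives that $Q^{[k]}(\tau^{[k]*}+(\alpha_n/n)s)$ converges weakly to a two-sided Brownian motion with variance rate $\lambda^{[k]}=q(\tau^{[k]*})\nu^{[k]}$ and drift rate $-\lambda^{[k]}/2$; the continuity and positivity of the density at $\tau^{[k]*}$ in Assumption \ref{asm1} guarantees that the two branches share the same rate to leading order.

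Next I would reduce this limit to a canonical form. A deterministic time change $s\mapsto s/\lambda^{[k]}$ turns the limit into $V(t)=W(t)-|t|/2$, a two-sided standard Brownian motion with triangular drift, and both functionals of interest are invariant in distribution under this rescaling: $\sup_t V(t)$ is scale-free because $\sup_{t\ge0}\{\sigma W(t)-\tfrac{\sigma^2}{2}t\}\sim{\rm Exp}(1)$ for every $\sigma^2$, and the cross term $\tilde{V}(\argsup V)$ is scale-free because the factor $\lambda^{[k]}$ in the drift cancels the $1/\lambda^{[k]}$ scaling of the argmax location. It then remains to compute the canonical expectation $\E[\sup_t V(t)-\tilde{V}(\argmax_t V(t))]$, exactly as in \cite{Nin15}: splitting it as $\E[\sup_t V(t)]-\E[\tilde{V}(\hat{t})]$ with $\hat{t}=\argmax_t V(t)$, the first term equals $\E[\max(M_+,M_-)]=3/2$ for independent $M_\pm\sim{\rm Exp}(1)$ (the one-sided suprema), while the independence of the copy $\tilde{V}$ gives $\E[\tilde{V}(\hat{t})\mid\hat{t}]=-\frac{1}{2}|\hat{t}|$, so the second term equals $-\frac{1}{2}\E[|\hat{t}|]=-3/2$ using the known value $\E[|\argmax_t V(t)|]=3$. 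Hence each change-point contribution is $3/2-(-3/2)=3$, and the theorem follows.

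The main obstacle is the passage from weak convergence to convergence of the expectations, i.e. establishing that the functionals $\sup_\tau Q^{[k]}$ and $\tilde{Q}^{[k]}(\argsup_\tau Q^{[k]})$ are uniformly integrable and that $\argsup$ behaves as a continuous functional in the limit. This requires, first, confirming that the supremum may legitimately be restricted to the window $\bm{\tau}-\bm{\tau}^*=\O(\alpha_n/n)$ --- the analogue under Assumption \ref{asm2} of \eqref{gtoinf} --- so that the argmax cannot escape; and second, exploiting the strictly negative drift of $Q^{[k]}$ to obtain exponential tail bounds on $\sup_\tau Q^{[k]}$ and on $|\argsup_\tau Q^{[k]}|$ that hold uniformly in $n$, which then yield the uniform integrability needed to interchange limit and expectation.
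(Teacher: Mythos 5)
Your proposal is correct and follows essentially the same route as the paper: restrict the $\argsup$ to the window $\bm{\tau}-\bm{\tau}^*=\O(\alpha_n/n)$, rescale time by $\alpha_n/n$, obtain the two-sided Brownian limit $\sigma^{[k]}W_s-\sigma^{[k]2}|s|/2$ for $Q^{[k]}$ (your observation that the increment variance is twice the negative drift is exactly what the paper encodes in its definition of $\sigma^{[k]}$), and then add the per-change-point contribution $3/2+3/2=3$ to the $p(m+1)$ inherited from Theorem \ref{thm4}, a computation the paper simply cites from \cite{Nin15} while you carry it out explicitly. The only real divergence is your final uniform-integrability concern, which is moot under the paper's conventions: $c^{\rm limit}$ is defined via $\dlim$, so $\E[c^{\rm limit}]$ is the expectation of the limiting variable rather than the limit of expectations, and no interchange of limit and expectation is required.
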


\noindent
From this result, we propose to use
\begin{align}
\text{AIC} = -2\log f(\bm{x},\bm{y};\hat{\bm{\xi}}) + 6m + 2p(m+1)
\label{aic2}
\end{align}
as AIC for the segmented regression model when the regression function is discontinuous at the change-points.
This means that the penalty for the change-point parameter is $6$, while the penalty for the non-change-point parameter is still $2$.

\section{Numerical experiments}\label{sec5}

We examine the performance of AIC given by \eqref{aic1} and \eqref{aic2} through numerical experiments using the linear segmented regression model in \eqref{model}, where $\{x_i\}_{1\le i\le n}$ and $\{\epsilon_i\}_{1\le i\le n}$ are independent random variables following the uniform distribution $\U(0,1)$ on the interval $[0,1]$ and the standard Gaussian distribution $\N(0,1)$, respectively.
As one of the comparisons, we consider a criterion that incorrectly captures the non-regularity of the change-point parameter, and denote it as AIC$_{\rm naive}$.
Specifically, we set the penalty term of AIC$_{\rm naive}$ to $8m+4$ by assigning a value of 6 to the change-point even in continuous cases and to $6m+4$ by assigning a value of 2 to the change-point even in discontinuous cases.
In addition, as another comparison, we consider BIC, where the penalty term is set to $(2m+2)\log n$ in continuous cases and to $(3m+2)\log n$ in discontinuous cases.

First, we check whether the asymptotic bias evaluation, which is the basis of our AIC penalty term, can approximate the bias of the maximum log-likelihood.
Table \ref{tab1} provides the true biases obtained from the Monte Carlo method for various change-point numbers $m$, sample sizes $n$, and true values of the regression parameter vector $\bm{\theta}^*$.
The number of simulation repetitions is set to 1,000 throughout the paper.
The true value of the change-point parameter vector $\bm{\tau}$ is set to $\bm{\tau}^*=0.5\ (m=1),\ (0.3,0.7)^\T\ (m=2),\ (0.25,0.5,0.75)^\T\ (m=3),\ (0.2,0.4,0.6,0.8)^\T\ (m=4)$, and the vector $\bm{\theta}_0$ used to provide the true value of the parameter $\bm{\theta}$ is set to $(-0.5,0,0.5,0)^\T\ (m=1),\ (-0.5,0,0.5,0,\allowbreak -0.5,0)^\T\ (m=2),\ (-0.5,0,0.5,0,-0.5,0,0.5,0)^\T\ (m=3),\ (-0.5,0,0.5,0,-0.5,0,0.5,0,\allowbreak -0.5,0)^\T\ (m=4)$ for the continuous case and $(-1,0,1,-1)^\T\ (m=1),\ (-1,0,1,-0.6,-1,\allowbreak 0.8)^\T\ (m=2),\ (-1,0,1,-0.5,-1,0.5,1,-1)^\T\ (m=3),\ (-1,0,1,-0.4,-1,0.4,1,-0.8,\allowbreak -1,0.8)^\T\ (m=4)$ for the discontinuous case.
From these values, we can see that our AIC penalty term accurately approximates the bias in the continuous case.
In the discontinuous case, since the model is too complex, our AIC penalty term only maintains a certain degree of approximation accuracy; however, it is clearly closer to bias than the AIC$_{\rm naive}$ and BIC penalty terms.

\begin{table}[!t]
\caption{Bias evaluation of maximum log-likelihood: For various change-point numbers $m$, sample sizes $n$, and true values of parameter vector $\bm{\theta}^*$ defined using appropriate $\bm{\theta}_0$, each value is obtained using the Monte Carlo method.}
\begin{center}
\begin{tabular}{cccccccccc}
\hline
 & & \multicolumn{4}{c}{continuous case} & \multicolumn{4}{c}{discontinuous case} \\
$n$ & $\bm{\theta}^*$ & $m=1$ & $m=2$ & $m=3$ & $m=4$ & $m=1$ & $m=2$ & $m=3$ & $m=4$ \\ 
\hline
\multirow{4}{*}{100} & $\bm{\theta}_0$ & 4.08 & 5.87 & 7.51 & 8.74 & 7.83 & 14.36 & 19.98 & 25.12 \\
 & $2\bm{\theta}_0$ & 4.27 & 6.14 & 7.60 & 8.75 & 7.76 & 14.36 & 20.29 & 25.83 \\ 
 & $3\bm{\theta}_0$ & 4.36 & 6.19 & 7.77 & 8.90 & 7.42 & 14.11 & 21.46 & 26.84 \\ 
 & $4\bm{\theta}_0$ & 4.32 & 6.26 & 7.90 & 9.05 & 7.36 & 15.13 & 22.90 & 29.10 \\
\hline
\multirow{4}{*}{200} & $\bm{\theta}_0$ & 4.14 & 6.05 & 7.76 & 9.29 & 8.21 & 14.89 & 21.03 & 27.05 \\
 & $2\bm{\theta}_0$ & 4.39 & 6.11 & 7.84 & 9.42 & 7.52 & 14.19 & 20.76 & 27.28 \\
 & $3\bm{\theta}_0$ & 4.32 & 6.08 & 7.97 & 9.40 & 7.16 & 13.96 & 20.76 & 27.67 \\
 & $4\bm{\theta}_0$ & 4.21 & 6.27 & 8.06 & 9.66 & 7.56 & 14.53 & 21.74 & 29.52 \\
\hline
\multirow{4}{*}{300} & $\bm{\theta}_0$ & 4.26 & 6.01 & 7.89 & 9.60 & 8.28 & 14.97 & 21.58 & 28.02 \\
 & $2\bm{\theta}_0$ & 4.45 & 6.10 & 7.92 & 9.64 & 7.13 & 13.94 & 20.61 & 26.75 \\
 & $3\bm{\theta}_0$ & 4.30 & 5.96 & 8.07 & 9.67 & 6.84 & 13.70 & 20.21 & 27.36 \\
 & $4\bm{\theta}_0$ & 4.21 & 6.20 & 8.33 & 9.69 & 7.14 & 14.35 & 20.96 & 29.00 \\
\hline
\end{tabular}
\end{center}
\label{tab1}
\end{table}

Next, to assess the performance of our AIC from the perspective of model selection, we evaluate the Kullback-Leibler divergence between the true and estimated distributions using the Monte Carlo method as the main index, and also provide the selection rate of change-points as a reference index.
The results for cases where the regression function is continuous and the number of true change-points is 1 and 2 are summarized in Table \ref{tab2} and Table \ref{tab4}, respectively, and the results for cases where the regression function is discontinuous and the number of true change-points is 1 and 2 are summarized in Table \ref{tab3} and Table \ref{tab5}, respectively.
Regarding Table \ref{tab2}, when the sample size and change are small, as in the top row, it is better to select a model with 0 change-points, and AIC selects the true model too often, returning results that are slightly inferior from the perspective of the main index.
In addition, as in the bottom row, where both the sample size and the change are large, and the model with 0 change-points is clearly inappropriate, AIC returns inferior results because it selects a slightly redundant model.
Otherwise, AIC clearly outperforms AIC$_{\rm naive}$ and BIC.
Regarding Table \ref{tab3}, when the change is small, or when both the sample size and the change are large, AIC returns results that are equivalent to or slightly inferior to BIC.
In other cases, although the difference is not as large as in Table \ref{tab2}, AIC outperforms BIC.
AIC$_{\rm naive}$ is always clearly inferior.
In Tables \ref{tab4} and \ref{tab5}, there are large and small changes, making it difficult to grasp the properties of each criterion.
However, we can say that while AIC$_{\rm naive}$ and BIC in continuous cases and BIC in discontinuous cases are sometimes slightly superior to AIC, the result that AIC basically performs better remains unchanged.

\begin{table}[!p]
\caption{Evaluation of Kullback-Leibler divergence and selection rate of change-points when the regression function is continuous and the number of true change-points is 1: Each value is obtained using the Monte Carlo method for various sample sizes $n$ and true values of the parameter vector $\bm{\theta}^*$.}
\begin{center}
\renewcommand{\baselinestretch}{1.1}\selectfont
\begin{tabular}{ccccccccc} \hline
 & & & \multicolumn{2}{c}{K-L ($\times 100$)} & \multicolumn{4}{c}{selection rate (\%)} \\
 $n$ & $\bm{\theta}^*$ & criterion & mean & med &  0 & 1 & 2 & 3$^+$ \\ \hline
\multirow{12}{*}{100} & & AIC & 
5.61	& 	4.52	&	42.5	&	49.1	& \ 6.3 &	2.1 \\
 & $(0,0,2.5,-1.25)$ & AIC$_{\rm naive}$ &
5.52	&	4.70 &	86.7	&	13.3	&	\ 0.0 &	0.0 \\
 & & BIC &
5.52 &	4.71 &	83.6	&	16.3	&	\ 0.1 &	0.0 \\ \cline{2-9}
 & & AIC &
5.94 &	5.17 &	29.9	&	60.2	&	\ 7.5 &	2.4 \\
 & $(0,0,3,-1.5)$ & AIC$_{\rm naive}$ &
6.62 &	6.00 &	77.8	&	22.2	&	\ 0.0 &	0.0 \\
 & & BIC &
6.52 &	5.95 &	73.9	&	26.0	&	\ 0.1 &	0.0 \\ \cline{2-9}
 & & AIC &
6.08 &	5.42 &	20.2	&	69.1	&	\ 7.9 &	2.8 \\
 & $(0,0,3.5,-1.75)$ & AIC$_{\rm naive}$ &
7.48 &	7.32 &	67.0	&	33.0	&	\ 0.0 &	0.0 \\
 & & BIC &
7.25 &	7.21 &	61.8	&	38.0	&	\ 0.2 &	0.0 \\ \cline{2-9}
 & & AIC &
5.95 &	4.63 &	11.9	&	78.0	&	\ 7.1 &	3.0 \\
 & $(0,0,4,-2)$ & AIC$_{\rm naive}$ &
7.88 &	8.75 &	52.5	&	47.4	&	\ 0.1 &	0.0 \\
 & & BIC &
7.63 &	8.60 &	48.6	&	51.2	&	\ 0.2 &	0.0 \\ \hline
\multirow{12}{*}{200} & & AIC & 
3.04	& 2.67	& 17.7	& 68.1	& 10.4	& 3.8 \\
 & $(0,0,2.5,-1.25)$ & AIC$_{\rm naive}$ &
3.71	& 3.69	& 63.3	& 36.6	& \ 0.1	& 0.0 \\
 & & BIC &
3.79	& 3.72	& 66.8	& 33.1	& \ 0.1	& 0.0 \\ \cline{2-9}
 & & AIC & 
3.08	& 2.38	& \ 9.0	& 76.2	& 10.9	& 3.9 \\
 & $(0,0,3,-1.5)$ &	AIC$_{\rm naive}$ &
3.91	& 4.70	& 45.2	& 54.5	& \ 0.3	& 0.0 \\
 & & BIC &
4.06	& 4.78	& 48.9	& 50.8	& \ 0.3	& 0.0 \\ \cline{2-9}
 & & AIC & 
2.95	& 2.24	& \ 3.3	& 82.4	& 10.6	& 3.7 \\
 & $(0,0,3.5,-1.75)$ & AIC$_{\rm naive}$ &
3.68	& 2.61	& 27.1	& 72.7	& \ 0.2	& 0.0 \\
 & & BIC &
3.84	& 2.85	& 30.2	& 69.6	& \ 0.2	& 0.0 \\ \cline{2-9}
 & & AIC & 
2.94	& 2.22	& \ 1.4	& 83.4	& 11.4	& 3.8 \\
 & $(0,0,4,-2)$ & AIC$_{\rm naive}$ &
3.25	& 2.12	& 14.0	& 85.7	& \ 0.3	& 0.0 \\
 & & BIC &
3.38	& 2.18	& 15.8	& 83.9	& \ 0.3	& 0.0 \\ \hline
\multirow{12}{*}{300}	 & & AIC & 
2.02	& 1.63	& \ 8.3	& 77.1	& \ 9.5	& 5.1 \\
 & $(0,0,2.5,-1.25)$ & AIC$_{\rm naive}$ &
2.61	& 2.89	& 44.5	& 55.2	& \ 0.3	& 0.0 \\
 & & BIC &
2.82	& 3.32	& 52.3	& 47.4	& \ 0.3	& 0.0 \\ \cline{2-9}
 & & AIC & 
2.02	& 1.51	& \ 2.7	& 81.3	& \ 9.8	& 6.2 \\
 & $(0,0,3,-1.5)$ &	AIC$_{\rm naive}$ &
2.34	& 1.66	& 22.3	& 77.3	& \ 0.4	& 0.0 \\
 & & BIC &
2.62	& 1.90	& 29.4	& 70.3	& \ 0.3	& 0.0 \\ \cline{2-9}
 & & AIC & 
1.99	& 1.53	& \ 0.6	& 83.4	& 10.3	& 5.7 \\
 & $(0,0,3.5,-1.75)$ & AIC$_{\rm naive}$ &
2.00	& 1.32	& \ 9.2	& 90.4	& \ 0.4	& 0.0 \\
 & & BIC &
2.22	& 1.39	& 13.1	& 86.5	& \ 0.4	& 0.0 \\ \cline{2-9}
 & & AIC & 
1.98	& 1.50	& \ 0.0	& 83.4	& 11.4	& 5.2 \\
 & $(0,0,4,-2)$ & AIC$_{\rm naive}$ &
1.74	& 1.24	& \ 3.0	& 96.6	& \ 0.4	& 0.0 \\
 & & BIC &
1.84	& 1.24	& \ 4.5	& 95.2	& \ 0.3	& 0.0 \\ \hline
\end{tabular}
\end{center}
\label{tab2}
\end{table}

\begin{table}[!p]
\caption{Evaluation of Kullback-Leibler divergence and selection rate of change-points when the regression function is discontinuous and the number of true change-points is 1: Each value is obtained using the Monte Carlo method for various sample sizes $n$ and true values of the parameter vector $\bm{\theta}^*$.}
\begin{center}
\renewcommand{\baselinestretch}{1.1}\selectfont
\begin{tabular}{ccccccccc} \hline
 & & & \multicolumn{2}{c}{K-L ($\times 10$)} & \multicolumn{4}{c}{selection rate (\%)} \\
 $n$ & $\bm{\theta}^*$ & criterion & mean & med &  0 & 1 & 2 & 3$^+$ \\ \hline
\multirow{12}{*}{100}	 & & AIC & 
		1.23 &	1.09 &	34.2 &	47.2 &	15.5 &	\ 3.1 \\
 & $(0,0,1.2,0)$ & AIC$_{\rm naive}$ &
		1.56 &	1.41 &	\ 7.3 &	26.2 &	35.9 &	30.6 \\
 & & BIC &
		1.13 &	1.05 &	77.7 &	21.0 &	\ 1.3 &	\ 0.0 \\ \cline{2-9}
 & & AIC & 
		1.37 &	1.19 &	\ 6.3 &	70.3 &	19.4 &	\ 4.0 \\
 & $(0,0,1.8,0)$ & AIC$_{\rm naive}$ &
		1.75 &	1.65 &	\ 0.6 &	28.6 &	36.6 &	34.2 \\
 & & BIC &
		1.46 &	1.33 &	30.5 &	67.7 &	\ 1.8 &	\ 0.0 \\ \cline{2-9}
 & & AIC & 
		1.14 &	1.11 &	22.0 &	61.3 &	13.9 &	\ 2.8 \\
 & $(0,0,-1.4,4)$ & AIC$_{\rm naive}$ &
		1.41 &	1.29 &	\ 2.6 &	33.1 &	37.9 &	26.4 \\
 & & BIC &
		1.19 &	1.18 &	61.7 &	37.3 &	\ 0.9 &	\ 0.1 \\ \cline{2-9}
 & & AIC & 
		1.10 &	0.96 &	\ 2.1 &	76.9 &	18.1 &	\ 2.9 \\
 & $(0,0,-2.4,6)$ & AIC$_{\rm naive}$ &
		1.42 &	1.29 &	\ 0.1 &	32.7 &	38.6 &	28.6 \\
 & & BIC &
		1.22 &	1.00 &	19.4 &	79.2 &	\ 1.3 &	\ 0.1 \\ \hline
\multirow{12}{*}{200}	 & & AIC & 
		0.60 &	0.60 &	21.1 &	60.0 &	17.0 &	\ 1.9 \\
 & $(0,0,1,0)$ & AIC$_{\rm naive}$ &
		0.79 &	0.74 &	\ 2.5 &	30.9 &	35.0 &	31.7 \\
 & & BIC &
		0.66 &	0.67 &	71.6 &	28.3 &	\ 0.1 &	\ 0.0 \\ \cline{2-9}
 & & AIC & 
		0.61 &	0.49 &	\ 0.7 &	76.7 &	20.2 &	\ 2.4 \\
 & $(0,0,1.6,0)$ & AIC$_{\rm naive}$ &
		0.84 &	0.77 &	\ 0.0 &	32.3 &	35.9 &	31.8 \\
 & & BIC &
		0.60 &	0.45 &	\ 9.5 &	89.9 &	\ 0.6 &	\ 0.0 \\ \cline{2-9}
 & & AIC & 
		0.59 &	0.52 &	12.6 &	71.1 &	14.3 &	\ 2.0 \\
 & $(0,0,-0.9,3)$ & AIC$_{\rm naive}$ &
		0.75 &	0.71 &	\ 1.8 &	35.7 &	36.3 &	26.2 \\
 & & BIC &
		0.69 &	0.73 &	59.0 &	40.7 &	\ 0.3 &	\ 0.0 \\ \cline{2-9}
 & & AIC & 
		0.56 &	0.71 &	\ 0.3 &	79.9 &	18.0 &	\ 1.8 \\
 & $(0,0,-1.9,5)$ & AIC$_{\rm naive}$ &
		0.75 &	0.71 &	\ 0.0 &	33.3 &	38.7 &	28.0 \\
 & & BIC &
		0.56 &	0.41 &	\ 9.2 &	90.2 &	\ 0.6 &	\ 0.0 \\ \hline
\multirow{12}{*}{300}	 & & AIC & 
		1.23 &	1.09 &	34.2 &	47.2 &	15.5 &	\ 3.1 \\
 & $(0,0,1.2,0)$ & AIC$_{\rm naive}$ &
		1.56 &	1.41 &	\ 7.3 &	26.2 &	35.9 &	30.6 \\
 & & BIC &
		1.13 &	1.05 &	77.7 &	21.0 &	\ 1.3 &	\ 0.0 \\ \cline{2-9}
 & & AIC & 
		1.37 &	1.19 &	\ 6.3 &	70.3 &	19.4 &	\ 4.0 \\
 & $(0,0,1.8,0)$ & AIC$_{\rm naive}$ &
		1.75 &	1.65 &	\ 0.6 &	28.6 &	36.6 &	34.2 \\
 & & BIC &
		1.46 &	1.33 &	30.5 &	67.7 &	\ 1.8 &	\ 0.0 \\ \cline{2-9}
 & & AIC & 
		0.41 &	0.41 &	20.2 &	60.4 &	16.4 &	\ 3.0 \\
 & $(0,0,0.8,0)$ & AIC$_{\rm naive}$ &
		0.53 &	0.50 &	\ 2.4 &	32.8 &	35.1 &	29.7 \\
 & & BIC &
		0.44 &	0.44 &	79.2 &	20.8 &	\ 0.0 &	\ 0.0 \\ \cline{2-9}
 & & AIC & 
		0.38 &	0.29 &	\ 0.2 &	76.3 &	19.7 &	\ 3.8 \\
 & $(0,0,1.4,0)$ & AIC$_{\rm naive}$ &
		0.55 &	0.51 &	\ 0.0 &	32.9 &	35.9 &	31.2 \\
 & & BIC &
		0.35 &	0.25 &	\ 6.2 &	93.4 &	\ 0.4 &	\ 0.0 \\ \hline
\end{tabular}
\end{center}
\label{tab3}
\end{table}

\begin{table}[!p]
\caption{Evaluation of Kullback-Leibler divergence and selection rate of change-points when the regression function is continuous and the number of true change-points is 2: Each value is obtained using the Monte Carlo method for various sample sizes $n$ and true values of the parameter vector $\bm{\theta}^*$.}
\begin{center}
\renewcommand{\baselinestretch}{1.1}\selectfont
\begin{tabular}{cccccccccc} \hline
 & & & \multicolumn{2}{c}{K-L ($\times 100$)} & \multicolumn{5}{c}{selection rate (\%)} \\
 $n$ & $\bm{\theta}^*$ & criterion & mean & med &  0 & 1 & 2 & 3 & 4$^+$ \\ \hline
\multirow{12}{*}{100}	 & & AIC & 
6.50	&	5.10	&	\ 4.0	&	85.0	&	\ 8.1	&	2.4	&	0.5 \\
 & $(0,0,3.5,-1.05,7,-3.5)$ & AIC$_{\rm naive}$ &
8.45	&	7.01	&	31.2	&	68.8	&	\ 0.0	&	0.0	&	0.0 \\
 & & BIC &
7.94	&	6.42	&	26.3	&	73.7	&	\ 0.0	&	0.0	&	0.0 \\
\cline{2-10}
 & & AIC & 
6.84	&	5.58	&	\ 1.3	&	85.5	&	10.3	&	2.1	&	0.8 \\
 & $(0,0,4,-1.2,8,-4)$ & AIC$_{\rm naive}$ &
8.24	&	6.07	&	17.6	&	82.4	&	\ 0.0	&	0.0	&	0.0 \\
 & & BIC &
7.81	&	5.83	&	14.2	&	85.7	&	\ 0.1	&	0.0	&	0.0 \\
\cline{2-10}
 & & AIC & 
7.23	&	5.99	&	\ 0.4	&	82.3	&	14.2	&	2.6	&	0.5 \\
 & $(0,0,4.5,-1.35,9,-4.5)$ & AIC$_{\rm naive}$ &
8.00	&	6.11	&	\ 8.1	&	91.8	&	\ 0.1	&	0.0	&	0.0 \\
 & & BIC &
7.70	&	6.02	&	\ 6.3	&	93.4	&	\ 0.3	&	0.0	&	0.0 \\
\cline{2-10}
 & & AIC & 
8.81	&	7.75	&	33.3	&	19.1	&	41.6	&	5.7	&	0.3 \\
 & $(0,0,5,-1.5,0,2)$ & AIC$_{\rm naive}$ &
8.69	&	7.78	&	89.2	&	\ 6.1	&	\ 4.7	&	0.0	&	0.0 \\
 & & BIC &
8.76	&	7.84	&	85.3	&	\ 7.3	&	\ 7.4	&	0.0	&	0.0 \\
\hline
\multirow{12}{*}{200}	 & & AIC & 
3.58	&	2.96	&	\ 1.1	&	81.9	&	13.9	&	2.4	&	0.7 \\
 & $(0,0,3,-0.9,6,-3	)$ & AIC$_{\rm naive}$ &
3.95	&	3.08	&	10.9	&	88.8	&	\ 0.3	&	0.0	&	0.0 \\
 & & BIC &
4.07	&	3.12	&	12.7	&	87.2	&	\ 0.1	&	0.0	&	0.0 \\
\cline{2-10}
 & & AIC & 
4.14	&	3.27	&	\ 0.0	&	73.9	&	21.8	&	3.1	&	1.2 \\
 & $(0,0,3.5,-1.05,7,-3.5)$ & AIC$_{\rm naive}$ &
3.93	&	3.30	&	\ 2.9	&	96.4	&	\ 0.7	&	0.0	&	0.0 \\
 & & BIC &
4.01	&	3.32	&	\ 3.9	&	95.5	&	\ 0.6	&	0.0	&	0.0 \\
\cline{2-10}
 & & AIC & 
4.10	&	3.54	&	\ 0.0	&	66.3	&	29.5	&	2.9	&	1.3 \\
 & $(0,0,4,-1.2,8,-4	)$ & AIC$_{\rm naive}$ &
4.28	&	3.78	&	\ 0.9	&	97.0	&	\ 2.1	&	0.0	&	0.0 \\
 & & BIC &
4.30	&	3.78	&	\ 1.0	&	97.9	&	\ 1.1	&	0.0	&	0.0 \\
\cline{2-10}
 & & AIC & 
4.53	&	4.05	&	\ 9.3	&	\ 9.5	&	72.8	&	6.8	&	1.6 \\
 & $(0,0,5,-1.5,0,2)$ & AIC$_{\rm naive}$ &
6.67	&	6.71	&	66.7	&	\ 9.9	&	23.2	&	0.2	&	0.0 \\
 & & BIC &
6.83	&	6.74	&	71.6	&	\ 9.9	&	18.4	&	0.1	&	0.0 \\
\hline
\multirow{12}{*}{300}	 & & AIC & 
2.50	&	2.06	&	\ 0.5	&	79.2	&	13.6	&	4.7	&	2.0 \\
 & $(0,0,2.5,-0.75,5,-2.5)$ & AIC$_{\rm naive}$ &
2.63	&	2.12	&	10.4	&	89.3	&	\ 0.3	&	0.0	&	0.0 \\
 & & BIC &
2.82	&	2.20	&	14.3	&	85.4	&	\ 0.3	&	0.0	&	0.0 \\
\cline{2-10}
 & & AIC & 
2.67	&	2.29	&	\ 0.0	&	71.1	&	22.3	&	5.3	&	1.3 \\
 & $(0,0,3,-0.9,6,-3	)$ & AIC$_{\rm naive}$ &
2.66	&	2.33	&	\ 2.1	&	96.9	&	\ 1.0	&	0.0	&	0.0 \\
 & & BIC &
2.73	&	2.34	&	\ 3.2	&	96.4	&	\ 0.4	&	0.0	&	0.0 \\
\cline{2-10}
 & & AIC & 
2.87	&	2.47	&	\ 0.0	&	57.3	&	34.5	&	6.4	&	1.8 \\
 & $(0,0,3.5,-1.05,7,-3.5)$ & AIC$_{\rm naive}$ &
2.96	&	2.71	&	\ 0.0	&	97.7	&	\ 2.3	&	0.0	&	0.0 \\
 & & BIC &
3.01	&	2.72	&	\ 0.4	&	98.3	&	\ 1.3	&	0.0	&	0.0 \\ 
\cline{2-10}
 & & AIC & 
3.02	&	2.52	&	\ 1.8	&	\ 3.3	&	83.4	&	8.5	&	3.0 \\
 & $(0,0,5,-1.5,0,2)$ & AIC$_{\rm naive}$ &
4.75	&	5.55	&	39.6	&	\ 9.3	&	50.8	&	0.3	&	0.0 \\
 & & BIC &
5.30	&	6.24	&	51.0	&	\ 9.5	&	39.4	&	0.1	&	0.0 \\
\hline
\end{tabular}
\end{center}
\label{tab4}
\end{table}

\begin{table}[!p]
\caption{Evaluation of Kullback-Leibler divergence and selection rate of change-points when the regression function is discontinuous and the number of true change-points is 2: Each value is obtained using the Monte Carlo method for various sample sizes $n$ and true values of the parameter vector $\bm{\theta}^*$.}
\begin{center}
\renewcommand{\baselinestretch}{1.1}\selectfont
\begin{tabular}{cccccccccc} \hline
 & & & \multicolumn{2}{c}{K-L ($\times 10$)} & \multicolumn{5}{c}{selection rate (\%)} \\
 $n$ & $\bm{\theta}^*$ & criterion & mean & med &  0 & 1 & 2 & 3 & 4$^+$ \\ \hline
\multirow{12}{*}{100}	 & & AIC & 
1.90	&	1.69	&	38.9	&	22.2	&	30.3	&	\ 7.6	&	\ 1.0
\\
 & $(0,0,1.4,0,2.8,0)$ & AIC$_{\rm naive}$ &
2.23	&	2.98	&	\ 4.8	&	10.1	&	31.7	&	33.5	&	19.9
\\
 & & BIC &
1.70	&	1.58	&	84.3	&	11.5	&	\ 4.1	&	\ 0.1	&	\ 0.0
\\ \cline{2-10}
 & & AIC & 
2.58	&	2.62	&	\ 8.9	&	13.7	&	58.7	&	16.1	&	\ 2.6
\\
 & $(0,0,2,0,4,0)$ & AIC$_{\rm naive}$ &
2.76	&	2.66	&	\ 0.3	&	\ 1.8	&	29.2	&	42.8	&	25.9
\\
 & & BIC &
2.88	&	2.94	&	52.7	&	18.3	&	28.1	&	\ 0.9	&	\ 0.0
\\ \cline{2-10}
 & & AIC & 
1.17	&	1.58	&	\ 0.2	&	67.7	&	24.5	&	\ 6.6	&	\ 1.0
\\
 & $(0,0,1.2,0,0,0)$ & AIC$_{\rm naive}$ &
2.05	&	1.90	&	\ 0.0	&	22.1	&	31.3	&	28.1	&	18.5
\\
 & & BIC &
1.67	&	1.50	&	\ 3.1	&	93.5	&	\ 3.3	&	\ 0.1	&	\ 0.0
\\ \cline{2-10}
 & & AIC & 
2.41	&	2.32	&	\ 0.0	&	32.1	&	51.4	&	14.7	&	\ 1.8
\\
 & $(0,0,1.8,0,0,0)$ & AIC$_{\rm naive}$ &
2.59	&	2.41	&	\ 0.0	&	\ 4.3	&	32.1	&	37.4	&	26.2
\\
 & & BIC &
2.62	&	2.47	&	\ 0.0	&	82.2	&	17.1	&	\ 0.7	&	\ 0.0
\\ \hline
\multirow{12}{*}{200}	 & & AIC & 
1.05	&	1.05	&	18.7	&	18.6	&	49.6	&	11.5	&	\ 1.6
\\
 & $(0,0,1.2,0,2.4,0)$ & AIC$_{\rm naive}$ &
1.12	&	1.09	&	\ 1.6	&	\ 4.2	&	35.8	&	37.3	&	21.1
\\
 & & BIC &
1.14	&	1.10	&	83.7	&	\ 9.3	&	\ 7.0	&	\ 0.0	&	\ 0.0
\\ \cline{2-10}
 & & AIC & 
1.13	&	1.00	&	\ 0.5	&	\ 1.8	&	76.6	&	18.5	&	\ 2.6
\\
 & $(0,0,1.8,0,3.6,0)$ & AIC$_{\rm naive}$ &
1.29	&	1.23	&	\ 0.0	&	\ 0.2	&	34.4	&	40.9	&	24.5
\\
 & & BIC &
1.48	&	1.34	&	24.6	&	12.9	&	62.1	&	\ 0.4	&	\ 0.0
\\ \cline{2-10}
 & & AIC & 
0.94	&	0.87	&	\ 0.0	&	56.1	&	35.9	&	\ 6.5	&	\ 1.5
\\
 & $(0,0,1,0,0,0)$ & AIC$_{\rm naive}$ &
1.07	&	1.01	&	\ 0.0	&	14.6	&	33.8	&	32.6	&	19.0
\\
 & & BIC &
0.92	&	0.85	&	\ 0.7	&	97.8	&	\ 1.5	&	\ 0.0	&	\ 0.0
\\ \cline{2-10}
 & & AIC & 
1.10	&	1.03	&	\ 0.0	&	\ 8.9	&	71.6	&	17.7	&	\ 1.8
\\
 & $(0,0,1.6,0,0,0)$ & AIC$_{\rm naive}$ &
1.23	&	1.17	&	\ 0.0	&	\ 0.6	&	34.6	&	40.8	&	24.0
\\
 & & BIC &
1.39	&	1.47	&	\ 0.0	&	57.6	&	41.9	&	\ 0.5	&	\ 0.0
\\ \hline
\multirow{12}{*}{300}	 & & AIC & 
0.69	&	0.71	&	17.3	&	17.9	&	51.7	&	11.7	&	\ 1.4
\\
 & $(0,0,1,0,2,0)$ & AIC$_{\rm naive}$ &
0.74	&	0.72	&	\ 1.3	&	\ 3.6	&	33.8	&	40.3	&	21.0
\\
 & & BIC &
0.78	&	0.76	&	85.4	&	\ 8.6	&	\ 5.9	&	\ 0.1	&	\ 0.0
\\ \cline{2-10}
 & & AIC & 
0.69	&	0.61	&	\ 0.0	&	\ 0.5	&	77.5	&	19.4	&	\ 2.6
\\
 & $(0,0,1.6,0,3.2,0)$ & AIC$_{\rm naive}$ &
0.82	&	0.77	&	\ 0.0	&	\ 0.0	&	32.2	&	41.8	&	26.0
\\
 & & BIC &
0.90	&	0.66	&	16.3	&	\ 9.3	&	74.0	&	\ 0.4	&	\ 0.0
\\ \cline{2-10}
 & & AIC & 
0.61	&	0.57	&	\ 0.0	&	60.4	&	30.5	&	\ 7.9	&	\ 1.2
\\
 & $(0,0,0.8,0,0,0)$ & AIC$_{\rm naive}$ &
0.71	&	0.68	&	\ 0.0	&	13.8	&	35.5	&	32.3	&	18.4
\\
 & & BIC &
0.60	&	0.56	&	\ 1.2	&	98.1	&	\ 0.7	&	\ 0.0	&	\ 0.0
\\ \cline{2-10}
 & & AIC & 
0.66	&	0.60	&	\ 0.0	&	\ 3.2	&	77.2	&	18.0	&	\ 1.6
\\
 & $(0,0,1.4,0,0,0)$ & AIC$_{\rm naive}$ &
0.77	&	0.73	&	\ 0.0	&	\ 0.0	&	34.5	&	41.5	&	24.0
\\
 & & BIC &
0.91	&	1.01	&	\ 0.0	&	51.4	&	48.6	&	\ 0.0	&	\ 0.0 
\\ \hline
\end{tabular}
\end{center}
\label{tab5}
\end{table}

\section{Real data analysis}\label{sec6}

In Sections \ref{sec4} and \ref{sec5}, AIC was derived separately for models with change-points that leave the regression function continuous and for models with change-points that make it discontinuous; it can be extended to models with both types of change-points.
That is, when the number of change-points in the continuous case is $m_1$ and the number of change-points in the discontinuous case is $m_2$ ($m=m_1+m_2$), we only have to set the penalty term for AIC to $2p+2m_1p+2m_2(3+p)=2p(m+1)+6m_2$.
By using this for model selection, it is possible to estimate not only the number of change-points, but also whether the regression function is continuous or discontinuous at each change-point.
The information criteria to be compared are AIC$_{\rm navie}$ and BIC.
Since AIC$_{\rm navie}$ in Tables \ref{tab2} and \ref{tab4}, which sets the penalty of the change-point parameter to 6 even in the continuous case, is close to BIC, we leave it at 2 and treat AIC$_{\rm navie}$, which sets the penalty of the change-point parameter to 2 even in the discontinuous case.
That is, the penalty term for AIC$_{\rm navie}$ is $2p+2m_1p+2m_2(1+p)=2p(m+1)+2m_2$ and that for BIC is $\{p+m_1p+m_2(1+p)\}\log n=\{p(m+1)+m_2\}\log n$.

The real data analyzed are the number of new COVID-19 infections in the United Kingdom for the 240-day period from March 13, 2020 to November 7, 2020.
Supposing a piecewise-linear segmented regression model in \eqref{model}, we have estimated the change-points and the regression function drawn in the left panel of Figure \ref{fig1}.
In this setting, both AIC$_{\rm navie}$ and BIC give the same results as AIC.
In the right panel, the regression function is supposed to be discontinuous at the change-points, as in \cite{JiaZS22}.
In fact, the result is close to that given by \cite{JiaZS22}.
In this setting, while BIC gives the same result, AIC$_{\rm navie}$ selects more change-points.
Since these are not artificial data, we cannot say whether the left panel or the right panel is closer to the true structure; however, it is more natural that they should be continuous unless there is some reason, such as a change in the way of data collection, and therefore we can infer that some of them are in fact continuous as in the left panel.

Considering the possibility that large fluctuations in the data may make analysis difficult, we reduce the data size to 200 and conduct the estimation in Figure \ref{fig2}.
In these data, while AIC$_{\rm navie}$ gives the same results, BIC selects fewer change-points.
The true structure is still unknown; it is highly possible that there is a discontinuity in the regression function in the later part of the data for some reason.
In order to find out where and how large the discontinuities actually are, we will need to do a more sophisticated modeling adapted to the data.
What we want to insist on in this section is that the results will differ considerably depending on the model selection criteria and that it is therefore important to derive a reasonable criterion.

\begin{figure}[!t]
\begin{overpic}[scale=0.44]{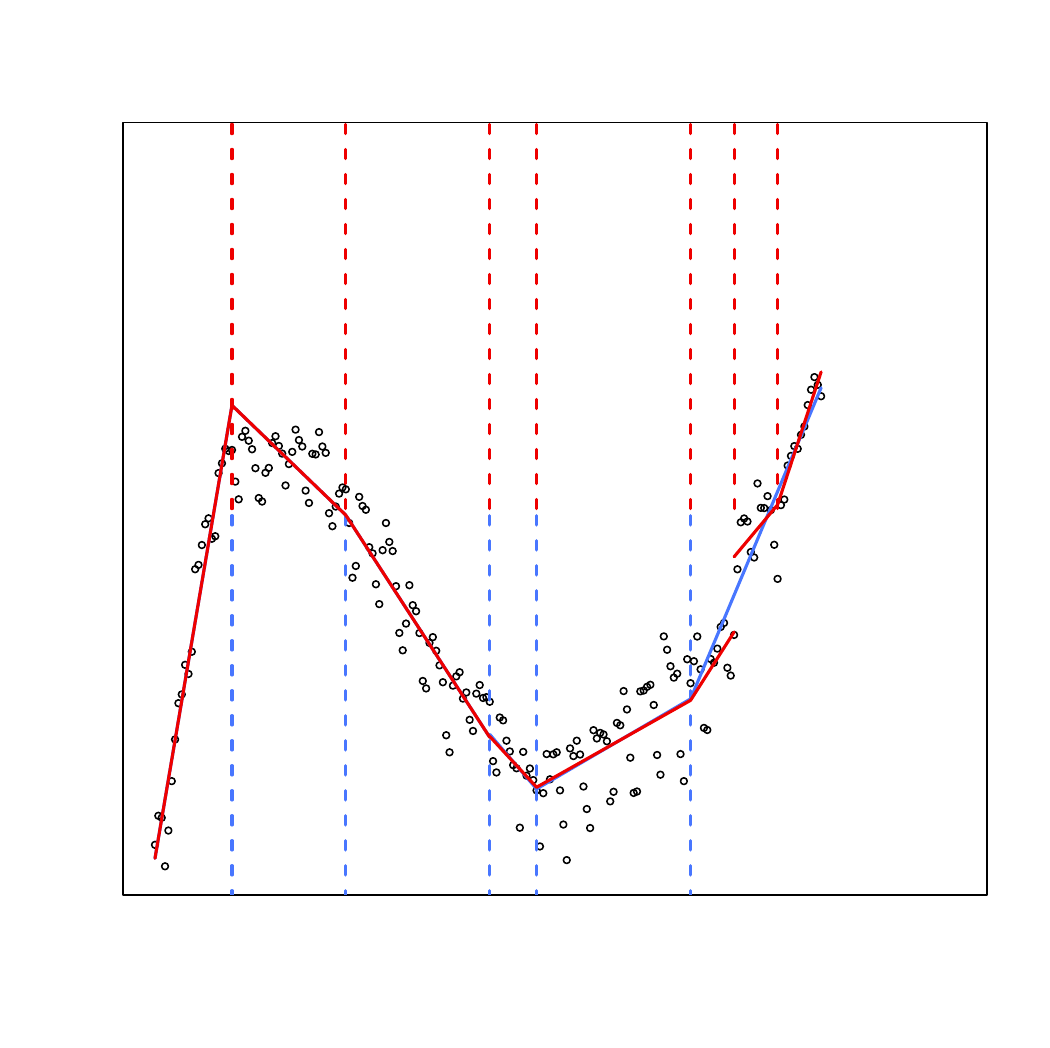}
\put(0,28){\footnotesize\rotatebox{90}{logarithm of new cases}}
\put(6,18){\scriptsize\rotatebox{90}{6}}
\put(6,33){\scriptsize\rotatebox{90}{7}}
\put(6,48){\scriptsize\rotatebox{90}{8}}
\put(6,63){\scriptsize\rotatebox{90}{9}}
\put(6,78){\scriptsize\rotatebox{90}{10}}
\put(9,9){\scriptsize Mar-13}
\put(28,9){\scriptsize May-12}
\put(47,9){\scriptsize Jul-11}
\put(66,9){\scriptsize Sep-09}
\put(85,9){\scriptsize Nov-07}
\put(50,3){\footnotesize date}
\end{overpic}
\hspace{5mm}
\begin{overpic}[scale=0.44]{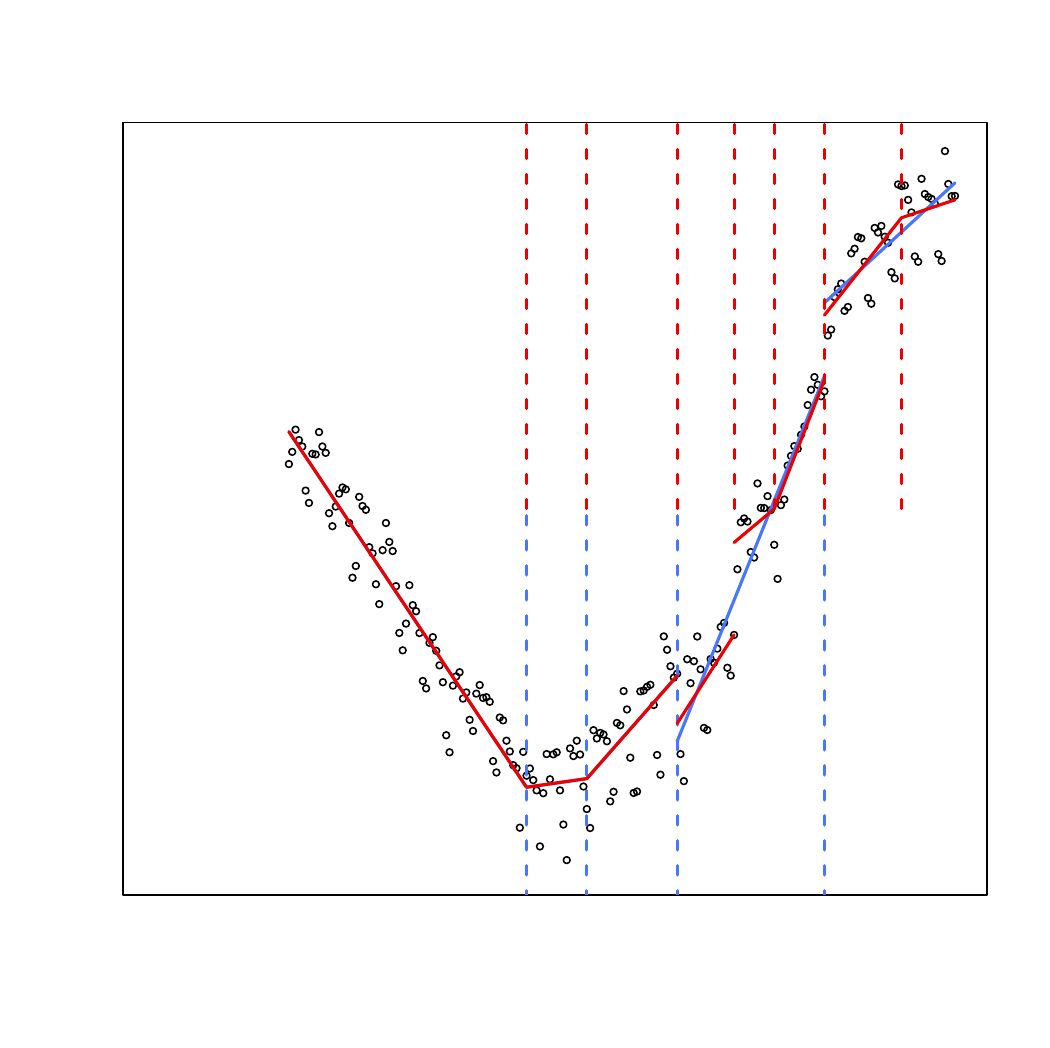}
\put(0,28){\footnotesize\rotatebox{90}{logarithm of new cases}}
\put(6,18){\scriptsize\rotatebox{90}{6}}
\put(6,33){\scriptsize\rotatebox{90}{7}}
\put(6,48){\scriptsize\rotatebox{90}{8}}
\put(6,63){\scriptsize\rotatebox{90}{9}}
\put(6,78){\scriptsize\rotatebox{90}{10}}
\put(9,9){\scriptsize Mar-13}
\put(28,9){\scriptsize May-12}
\put(47,9){\scriptsize Jul-11}
\put(66,9){\scriptsize Sep-09}
\put(85,9){\scriptsize Nov-07}
\put(50,3){\footnotesize date}
\end{overpic}
\caption{Results obtained by allowing the regression function to be continuous at the change-points for partial data on the number of new COVID-19 infections in the United Kingdom: The red solid and broken lines are the regression function and change-points estimated using the proposed AIC. The blue solid and broken lines are the regression function and change-points estimated using BIC.}
\label{fig2}
\end{figure}

\section{Summary and discussion}\label{sec7}

Information criteria in segmented regression have so far been developed mainly for BIC-type with model selection consistency from the perspective of increasing the probability of selecting the true model.
In this paper, we have derived an information criterion based on the original definition of AIC from the perspective of making the divergence between the true and estimated structures small.
The reason why this result has not been obtained until now is the existence of asymptotics specific to segmented regression, and we have actually confirmed that when the regression function is discontinuous at the change-points, the penalty for the change-point parameter is 6, as in the result of \cite{Nin15}.
When the regression function is continuous at the change-points, the asymptotic normality of the change-point estimators has been derived first by developing \cite{KimK08}.
Its derivation basically follows the flow of \cite{KimK08}; however, because the model is extended to generalized nonlinear ones, it is necessary to change how to use the continuity and to add some assumption.
Then it has been shown that the penalty on the change-point parameter remains 2 despite the abrupt change in the differential coefficients.
In numerical experiments, we have compared the derived AIC with AIC$_{\rm naive}$, which incorrectly captures the peculiarities of the model, and BIC, and confirmed that the derived AIC tends to make the divergence small, as we expected.
In real data analysis, we have treated the data on the number of new COVID-19 infections, which gave the motivation for this study, and judged that some of the jumps in the regression function estimated by \cite{JiaZS22} would be connected according to our AIC.
We have also confirmed that while the results of the three criteria happened to coincide when the data were analyzed as they were, they returned different results when the time period of the data was slightly changed, that is, the difference in the criteria was not negligible.

One of the important settings in which our method can be extended in the near future will be segmented regression in survival analysis.
While \cite{LiuLS08} and \cite{OzaN23} apply a Cox model with jumps in the regression function to data from randomized, placebo-controlled clinical trials of patients with malignant glioma, changes without jumps can naturally be expected in such data.
Although the model is not included in \eqref{model2} and the partial likelihood is not treated in this paper, it is expected that a risk evaluation similar to that of \cite{OzaN23} can be performed.
Another near future issue is to extend our method to more recent change-point problems of importance, such as those listed in Section \ref{sec1} (\citealt{AueH13}, \citealt{BerGHK09}, and \citealt{AueHHR09}).
The same derivation should be possible for serially correlated data, and it is expected that for functional and high-dimensional data, the number of change-points as well as the number of principal components, can be estimated by developing $C_p$ criterion.
A further challenge is the extension of the method for selecting the number of nodes in a spline by \cite{Rup02}, with a view to contributing to the ongoing topic of selecting the number of hyper-parameters in nonparametric regression (e.g., \citealt{WooPS16}). 
Considering that jumps are allowed in the framework of generalized linear models, it is necessary to develop AIC as in this paper.
Since derivative coefficients vary smoothly in jump-free splines, our method must be extended from this stage, and since regularization estimation is essential, it will be necessary to incorporate methods such as the one in \cite{NinK16}.

\section*{Appendix}

\subsection*{Proof of Lemma \ref{lem1}}

Since $a(\cdot)$ and $\phi(\cdot;\cdot)$ are $C^2$-functions, we have
\begin{align}
& y_i\phi(x_i;\bm{\theta})-a(\phi(x_i;\bm{\theta}))
\notag \\
& = y_i\phi(x_i;\bm{\theta}^*)-a(\phi(x_i;\bm{\theta}^*))+(\phi(x_i;\bm{\theta})-\phi(x_i;\bm{\theta}^*))(y_i-a'(\phi(x_i;\bm{\theta}^*)))
\notag \\
& \ \phantom{=} -\frac{1}{2}(\phi(x_i;\bm{\theta})-\phi(x_i;\bm{\theta}^*))^2a''(\phi(x_i;\bm{\theta}^*))+\oP(\|\bm{\theta}-\bm{\theta}^*\|^2)
\label{forlem2}
\end{align}
for $\bm{\theta},\ \bm{\theta}^*\in\mathbb{R}^p$. 
Therefore, if $\bm{\theta}\to\bm{\theta}^*$, it holds
\begin{align*}
H(\bm{\theta}) & = \frac{1}{n}\sum_{k=1}^{m+1}\sum_{i=1}^nI_{[\tau^{[k-1]*},\tau^{[k]*})}(x_i)
\\
& \ \phantom{=} \times \bigg\{(\phi(x_i;\bm{\theta}^{[k]})-\phi(x_i;\bm{\theta}^{[k]*}))(y-a'(\phi(x_i;\bm{\theta}^{[k]*})))
\\
& \ \phantom{= \bigg[ \times} - \frac{1}{2}(\phi(x_i;\bm{\theta}^{[k]})-\phi(x_i;\bm{\theta}^{[k]*}))^2a''(\phi(x_i;\bm{\theta}^{[k]*})) + \oP(\|\bm{\theta}^{[k]}-\bm{\theta}^{[k]*}\|^2)\bigg\}.
\end{align*}
Now, using
\begin{align*}
& \phi(x_i;\bm{\theta}^{[k]})-\phi(x_i;\bm{\theta}^{[k]*})
\\
& = (\bm{\theta}^{[k]}-\bm{\theta}^{[k]*})^\T\frac{\partial}{\partial\bm{\theta}}\phi(x_i;\bm{\theta}^{[k]*}) + (\bm{\theta}^{[k]}-\bm{\theta}^{[k]*})^\T\frac{\partial^2}{\partial\bm{\theta}\partial\bm{\theta}^\T}\phi(x_i;\bm{\theta}^{[k]*})(\bm{\theta}^{[k]}-\bm{\theta}^{[k]*})
\\
& \ \phantom{=} + \oP(\|\bm{\theta}^{[k]}-\bm{\theta}^{[k]*}\|^2)
\end{align*}
and denoting
\begin{align*}
& \bm{\Delta}_i^{[k]} = I_{[\tau^{[k-1]*},\tau^{[k]*})}(x_i)(y_i-a'(\phi(x_i;\bm{\theta}^{[k]*})))\frac{\partial}{\partial\bm{\theta}}\phi(x_i;\bm{\theta}^{[k]*}),
\\
& \bm{\Delta}_i = (\bm{\Delta}_i^{[1]\T},\bm{\Delta}_i^{[2]\T},\ldots,\bm{\Delta}_i^{[m+1]\T})^{\T},
\\
& \bm{S}_i^{[k]} = I_{[\tau^{[k-1]*},\tau^{[k]*})}(x_i)\bigg[a''(\phi(x_i;\bm{\theta}^{[k]*}))\frac{\partial}{\partial\bm{\theta}}\phi(x_i;\bm{\theta}^{[k]*})\frac{\partial}{\partial\bm{\theta}^{\T}}\phi(x_i;\bm{\theta}^{[k]*})
\\
& \phantom{\bm{S}_i^{[k]} = I_{[\tau^{[k-1]*},\tau^{[k]*})}(x_i)\bigg[} + (y_i-a'(\phi(x_i;\bm{\theta}^{[k]*})))\frac{\partial^2}{\partial\bm{\theta}\partial\bm{\theta}^\T}\phi(x_i;\bm{\theta}^{[k]*})\bigg],
\\
& \bm{S}_i = {\rm diag}(\bm{S}_i^{[1]},\bm{S}_i^{[2]},\ldots,\bm{S}_i^{[m+1]}),
\end{align*}
it can be rewritten that
\begin{align*}
H(\bm{\theta}) = (\bm{\theta}-\bm{\theta}^*)^\T\frac{1}{n}\sum_{i=1}^n\bm{\Delta}_i-\frac{1}{2}(\bm{\theta}-\bm{\theta}^*)^\T \bigg(\frac{1}{n}\sum_{i=1}^n\bm{S}_i\bigg) (\bm{\theta}-\bm{\theta}^*) + \oP(\|\bm{\theta}-\bm{\theta}^*\|^2).
\end{align*}
Furthermore, since it holds
\begin{align*}
& \E[\bm{\Delta}_i^{[k]}] = \E[\E[\bm{\Delta}_i^{[k]}\mid x_i]]
\\
& = \E\bigg[I_{[\tau^{[k-1]*},\tau^{[k]*})}(x_i)\E[y_i-a'(\phi(x_i;\bm{\theta}^{[k]*}))\mid x_i]\frac{\partial}{\partial\bm{\theta}}\phi(x_i;\bm{\theta}^{[k]*})\bigg] = \bm{0}_p, 
\\
& \E[\bm{\Delta}_i^{[k]}\bm{\Delta}_i^{[k]\T}] = \E[\E[\bm{\Delta}_i^{[k]}\bm{\Delta}_i^{[k]\T}\mid x_i]]
\\
& = \E\bigg[I_{[\tau^{[k-1]*},\tau^{[k]*})}(x_i)\E[\{y_i-a'(\phi(x_i;\bm{\theta}^{[k]*}))\}^2\mid x_i]\frac{\partial}{\partial\bm{\theta}}\phi(x_i;\bm{\theta}^{[k]*})\frac{\partial}{\partial\bm{\theta}^{\T}}\phi(x_i;\bm{\theta}^{[k]*})\bigg] = \bm{T}^{[k]},
\\
& \E[\bm{S}_i^{[k]}] = \E[\E[\bm{S}_i^{[k]}\mid x_i]] = \bm{T}^{[k]},
\end{align*}
the central limit theorem and the law of large numbers complete the proof.

\subsection*{Proof of \eqref{gtoinf}}

As a representative of the situation $\bm{\tau}-\bm{\tau}^*\neq\O(n^{-1})$, we consider the case that $\tau^{[k^{\dagger}]}-\tau^{[k^{\dagger }]*}\neq\O(n^{-1})$ and $\tau^{[k^{\dagger}]*}<\tau^{[k^{\dagger}]}$ for some $k^{\dagger}\in\{1,2,\ldots,m\}$ and $\tau^{[k]}-\tau^{[k]*}=\O(n^{-1})$ for all $k\in\{1,2,\ldots,m\}\setminus k^{\dagger}$. 
Note that the following story holds for other cases where $\bm{\tau}-\bm{\tau}^*\neq\O(n^{-1})$.
In this representative case, it holds
\begin{align*}
& \sum_{i=1}^ng(\hat{\bm{\theta}}_{\bm{\tau}},\bm{\tau};x_i,y_i)-\sum_{i=1}^ng(\hat{\bm{\theta}}_{\bm{\tau}^*},\bm{\tau}^*;x_i,y_i)
\\
& = \sum_{i:\tau^{[k^{\dagger}-1]}<x_i\le\tau^{[k^{\dagger}]*}}\{y_i(\phi(x_i;\hat{\bm{\theta}}_{\bm{\tau}}^{[k^{\dagger}]})-\phi(x_i;\hat{\bm{\theta}}_{\bm{\tau}^*}^{[k^{\dagger}]}))\}-a(\phi(x_i;\hat{\bm{\theta}}_{\bm{\tau}}^{[k^{\dagger}]}))+a(\phi(x_i;\hat{\bm{\theta}}_{\bm{\tau}^*}^{[k^{\dagger}]}))\}
\\
& \ \phantom{=} + \sum_{i:\tau^{[k^{\dagger}]*}<x_i\le\tau^{[k^{\dagger}]}}\{y_i(\phi(x_i;\hat{\bm{\theta}}_{\bm{\tau}}^{[k^{\dagger}]}-\phi(x_i;\hat{\bm{\theta}}_{\bm{\tau}^*}^{[k^{\dagger}+1]}))-a(\phi(x_i;\hat{\bm{\theta}}_{\bm{\tau}}^{[k^{\dagger}]}))+a(\phi(x_i;\hat{\bm{\theta}}_{\bm{\tau}^*}^{[k^{\dagger}+1]}))\}
\\
& \ \phantom{=} + \sum_{i:\tau^{[k^{\dagger}]}<x_i\le\tau^{[k^{\dagger}+1]}}\{y_i(\phi(x_i;\hat{\bm{\theta}}_{\bm{\tau}}^{[k^{\dagger}+1]})-\phi(x_i;\hat{\bm{\theta}}_{\bm{\tau}^*}^{[k^{\dagger}+1]}))\}-a(\phi(x_i;\hat{\bm{\theta}}_{\bm{\tau}}^{[k^{\dagger}+1]}))+a(\phi(x_i;\hat{\bm{\theta}}_{\bm{\tau}^*}^{[k^{\dagger}+1]}))
\\
& \ \phantom{=} + \OP(1)
\\
& = \sum_{i:\tau^{[k^{\dagger}-1]}<x_i\le\tau^{[k^{\dagger}]*}}[\{\phi(x_i;\hat{\bm{\theta}}_{\bm{\tau}}^{[k^{\dagger}]})-\phi(x_i;\hat{\bm{\theta}}_{\bm{\tau}^*}^{[k^{\dagger}]})\}\{y_i-a'(\phi(x_i;\hat{\bm{\theta}}_{\bm{\tau}^*}^{[k^{\dagger}]}))\}-B_{\hat{\bm{\theta}}_{\bm{\tau}}^{[k^{\dagger}]},\hat{\bm{\theta}}_{\bm{\tau}^*}^{[k^{\dagger}]}}]
\\
& \ \phantom{=} + \sum_{i:\tau^{[k^{\dagger}]*}<x_i\le\tau^{[k^{\dagger}]}}[(\phi(x_i;\hat{\bm{\theta}}_{\bm{\tau}}^{[k^{\dagger}]})-\phi(x_i;\hat{\bm{\theta}}_{\bm{\tau}^*}^{[k^{\dagger}+1]}))\{y_i-a'(\phi(x_i;\hat{\bm{\theta}}_{\bm{\tau}^*}^{[k^{\dagger}+1]}))\}-B_{\hat{\bm{\theta}}_{\bm{\tau}}^{[k^{\dagger}]},\hat{\bm{\theta}}_{\bm{\tau}^*}^{[k^{\dagger}+1]}}]
\\
& \ \phantom{=} + \sum_{i:\tau^{[k^{\dagger}]}<x_i\le\tau^{[k^{\dagger}+1]}}[\{\phi(x_i;\hat{\bm{\theta}}_{\bm{\tau}}^{[k^{\dagger}+1]})-\phi(x_i;\hat{\bm{\theta}}_{\bm{\tau}^*}^{[k^{\dagger}+1]})\}\{y_i-a'(\phi(x_i;\hat{\bm{\theta}}_{\bm{\tau}^*}^{[k^{\dagger}+1]}))\}-B_{\hat{\bm{\theta}}_{\bm{\tau}}^{[k^{\dagger}+1]},\hat{\bm{\theta}}_{\bm{\tau}^*}^{[k^{\dagger}+1]}}]
\\
& \ \phantom{=} + \OP(1), 
\end{align*}
where $B_{\bm{\theta}^\dag,\bm{\theta}^\ddag}=a(\phi(x_i;\bm{\theta}^{\dag}))-a(\phi(x_i;\bm{\theta}^{\ddag}))-(\phi(x_i;\bm{\theta}^\dag)-\phi(x_i;\bm{\theta}^\ddag))a'(\phi(x_i;\bm{\theta}^{\ddag}))$.
Note that the first $\OP(1)$ contains a term from \eqref{order1}.
Using the same derivation as for \eqref{keyasym}, while taking care of Assumption \eqref{asm1} for $x_i$, we see that $\hat{\bm{\theta}}_{\bm{\tau}}^{[k^{\dagger}]}-\hat{\bm{\theta}}_{\bm{\tau}^*}^{[k^{\dagger}]}$ and $\hat{\bm{\theta}}_{\bm{\tau}}^{[k^{\dagger}+1]}-\hat{\bm{\theta}}_{\bm{\tau}^*}^{[k^{\dagger}+1]}$ are $\OP(\tau^{[k^{\dagger}]}-\tau^{[k^{\dagger}]*})$. 
Then, it follows from the central limit theorem that $\sum_{i:\tau^{[k^{\dagger}-1]}<x_i\le\tau^{[k^{\dagger}]*}}(\phi(x_i;\allowbreak\hat{\bm{\theta}}_{\bm{\tau}}^{[k^{\dagger}]})-\phi(x_i;\hat{\bm{\theta}}_{\bm{\tau}^*}^{[k^{\dagger}]}))(y_i-a'(\phi(x_i;\hat{\bm{\theta}}_{\bm{\tau}^*}^{[k^{\dagger}]})))+\sum_{i:\tau^{[k^{\dagger}]}<x_i\le\tau^{[k^{\dagger}+1]}}(\phi(x_i;\hat{\bm{\theta}}_{\bm{\tau}}^{[k^{\dagger}+1]})-\phi(x_i;\hat{\bm{\theta}}_{\bm{\tau}^*}^{[k^{\dagger}+1]}))(y_i-a'(\phi(x_i;\allowbreak\hat{\bm{\theta}}_{\bm{\tau}^*}^{[k^{\dagger}+1]})))$ is $\OP(\sqrt{n}(\tau^{[k^{\dagger}]}-\tau^{[k^{\dagger}]*})^{1/2})$, and from the definition that $\sum_{i:\tau^{[k^{\dagger}-1]}<x_i\le\tau^{[k^{\dagger}]*}}\allowbreak B_{\hat{\bm{\theta}}_{\bm{\tau}}^{[k^{\dagger}]},\hat{\bm{\theta}}_{\bm{\tau}^*}^{[k^{\dagger}]}}+\allowbreak\sum_{i:\tau^{[k^{\dagger}]*}<x_i\le\tau^{[k^{\dagger}]}}B_{\hat{\bm{\theta}}_{\bm{\tau}}^{[k^{\dagger}]},\hat{\bm{\theta}}_{\bm{\tau}^*}^{[k^{\dagger}+1]}}+\sum_{i:\tau^{[k^{\dagger}]}<x_i\le\tau^{[k^{\dagger}+1]}}B_{\hat{\bm{\theta}}_{\bm{\tau}}^{[k^{\dagger}+1]},\hat{\bm{\theta}}_{\bm{\tau}^*}^{[k^{\dagger}+1]}}$ is positive and not $\oP(n(\tau^{[k^{\dagger}]}-\tau^{[k^{\dagger}]*}))$ since $\phi(x_i;\hat{\bm{\theta}}_{\bm{\tau}}^{[k^{\dagger}]})-\phi(x_i;\hat{\bm{\theta}}_{\bm{\tau}^*}^{[k^{\dagger}+1]})\neq\oP(1)$.
Thus, we obtain
\begin{align}
\P\bigg(\sum_{i=1}^ng(\hat{\bm{\theta}}_{\bm{\tau}},\bm{\tau};x_i,y_i)-\sum_{i=1}^ng(\hat{\bm{\theta}}_{\bm{\tau}^*},\bm{\tau}^*;x_i,y_i)>-M\bigg)\to 0. 
\label{infzero}
\end{align}
The proof is complete from \eqref{order2} and \eqref{infzero}.

\subsection*{Proof of Theorem \ref{thm4}}

When $\bm{\tau}-\bm{\tau}^*=\O(n^{-1})$, it follows from \eqref{keyasym} and Taylor expansion that
\begin{align}
& \sum_{i=1}^ng(\bm{\theta}^*,\bm{\tau};\tilde{x}_i,\tilde{y}_i) - \sum_{i=1}^ng(\hat{\bm{\theta}}_{\bm{\tau}},\bm{\tau};\tilde{x}_i,\tilde{y}_i)
\notag \\
& = -\sum_{k=1}^{m+1}\sum_{i:\tau^{[k-1]}<\tilde{x}_i\le\tau^{[k]}}(\hat{\bm{\theta}}_{\bm{\tau}}^{[k]}-\bm{\theta}^{[k]*})^{\T}g'(\bm{\theta}^{[k]*};\tilde{x}_i,\tilde{y}_i)
\notag \\
& \ \phantom{=} + \frac{1}{2}\sum_{k=1}^{m+1}\sum_{i:\tau^{[k-1]}<\tilde{x}_i\le\tilde{\tau}^{[k]}}(\hat{\bm{\theta}}_{\bm{\tau}}^{[k]}-\bm{\theta}^{[k]*})^{\T}g''(\bm{\theta}^{[k]*};\tilde{x}_i,\tilde{y}_i)(\hat{\bm{\theta}}_{\bm{\tau}}^{[k]}-\bm{\theta}^{[k]*}) + \oP(1)
\notag \\
& = -\sum_{k=1}^{m+1}\sum_{i:\tau^{[k-1]*}<\tilde{x}_i\le\tau^{[k]*}}(\hat{\bm{\theta}}_{\bm{\tau}}^{[k]}-\bm{\theta}^{[k]*})^{\T}g'(\bm{\theta}^{[k]*};\tilde{x}_i,\tilde{y}_i)
\notag \\
& \ \phantom{=} + \frac{1}{2}\sum_{k=1}^{m+1}\sum_{i:\tau^{[k-1]*}<\tilde{x}_i\le\tau^{[k]*}}(\hat{\bm{\theta}}_{\bm{\tau}}^{[k]}-\bm{\theta}^{[k]*})^{\T}g''(\bm{\theta}^{[k]*};\tilde{x}_i,\tilde{y}_i)(\hat{\bm{\theta}}_{\bm{\tau}}^{[k]}-\bm{\theta}^{[k]*}) + \oP(1)
\notag \\
& = -\sum_{k=1}^{m+1}\sum_{i:\tau^{[k-1]*}<\tilde{x}_i\le\tau^{[k]*}}(\hat{\bm{\theta}}_{\bm{\tau}^*}^{[k]}-\bm{\theta}^{[k]*})^{\T}g'(\bm{\theta}^{[k]*};\tilde{x}_i,\tilde{y}_i)
\notag \\
& \ \phantom{=} + \frac{1}{2}\sum_{k=1}^{m+1}\sum_{i:\tau^{[k-1]*}<\tilde{x}_i\le\tau^{[k]*}}(\hat{\bm{\theta}}_{\bm{\tau}^*}^{[k]}-\bm{\theta}^{[k]*})^{\T}g''(\bm{\theta}^{[k]*};\tilde{x}_i,\tilde{y}_i)(\hat{\bm{\theta}}_{\bm{\tau}^*}^{[k]}-\bm{\theta}^{[k]*}) + \oP(1)
\notag \\
& = -\sum_{k=1}^{m+1}\bm{u}^{[k]\T}\tilde{\bm{u}}^{[k]}+\frac{1}{2}\sum_{k=1}^{m+1}\bm{u}^{[k]\T}\bm{u}^{[k]} + \oP(1),
\label{thm4for1}
\end{align}
where $\tilde{\bm{u}}^{[k]}$ is a copy of $\bm{u}^{[k]}$, that is, a random vector following a multi-dimensional Gaussian distribution $\N(\bm{0}_p,\bm{I}_{p\times p})$ independently of $\bm{u}^{[k]}$.
In addition, it follows from \eqref{supres2} that
\begin{align}
\sum_{i=1}^ng\bigg(\bm{\theta}^*,\argsup_{\bm{\tau}}\sum_{i^{\dagger}=1}^n\hat{g}(\bm{\tau};x_{i^{\dagger}},y_{i^{\dagger}});\tilde{x}_i,\tilde{y}_i\bigg)
= \sum_{k=1}^{m}\tilde{Q}^{[k]}(\check{\tau}^{[k]}) + \oP(1),
\label{thm4for2}
\end{align}
where $\tilde{Q}^{[k]}(\cdot)$ is a copy of $Q^{[k]}(\cdot)$.
Thus, we obtain from \eqref{thm4for1} and \eqref{thm4for2} that
\begin{align}
& \sum_{i=1}^n\hat{g}\bigg(\argsup_{\bm{\tau}}\sum_{i^{\dagger}=1}^n\hat{g}(\bm{\tau};x_{i^{\dagger}},y_{i^{\dagger}});\tilde{x}_i,\tilde{y}_i\bigg)
\notag \\
& = \sum_{i=1}^ng\bigg(\bm{\theta}^*,\argsup_{\bm{\tau}}\sum_{i=1}^n\hat{g}(\bm{\tau};x_i,y_i);\tilde{x}_i,\tilde{y}_i\bigg) + \sum_{k=1}^{m+1}\bm{u}^{[k]\T}\tilde{\bm{u}}^{[k]} - \frac{1}{2}\sum_{k=1}^{m+1}\bm{u}^{[k]\T}\bm{u}^{[k]} + \oP(1)
\notag \\
& = \sum_{k=1}^m\tilde{Q}^{[k]}(\check{\tau}^{[k]}) + \sum_{k=1}^{m+1}\bm{u}^{[k]\T}\tilde{\bm{u}}^{[k]} - \frac{1}{2}\sum_{k=1}^{m+1}\bm{u}^{[k]\T}\bm{u}^{[k]} + \oP(1).
\label{supres3}
\end{align}
The proof is complete from \eqref{supres1} and \eqref{supres3}.

\subsection*{Proof of Theorem \ref{thm5}}

Regarding the asymptotic behavior of the change-point estimator under Assumption \ref{asm2}, the same argument as before applies: when $\bm{\tau}-\bm{\tau}^*=\O(n^{-1}\alpha_n)$, we can show $\sum_{i=1}^ng(\hat{\bm{\theta}}_{\bm{\tau}},\bm{\tau};x_i,y_i)=\OP(1)$, and when $\bm{\tau}-\bm{\tau}^* \neq \O(n^{-1}\alpha_n)$, we can show \eqref{gtoinf}.
Therefore, we take $\sup_{\bm{\tau}}$ and $\argsup_{\bm{\tau}}$ in the set of $\bm{\tau}$ such that $\bm{\tau}-\bm{\tau}^*=\O(n^{-1}\alpha_n)$.
Then, by Taylor expansion, $Q^{[k]}(\tau^{[k]*}+sn^{-1}\alpha_n)$ can be written as
\begin{align*}
& I_{\{s<0\}}\bigg(\frac{1}{\sqrt{\alpha_n}}\sum_{i:\tau^{[k]*}+s\alpha_n/n<x_i\le\tau^{[k]*}}\bm{\Delta}_{\bm{\theta}^*}^{[k]\T}g'(\bm{\theta}^{[k]*};x_i,y_i)
\\
& \ \phantom{I_{\{s<0\}}\bigg(} + \frac{1}{2\alpha_n}\sum_{i:\tau^{[k]*}+s\alpha_n/n<x_i\le\tau^{[k]*}}\bm{\Delta}_{\bm{\theta}^*}^{[k]\T}g''(\bm{\theta}^{[k]*};x_i,y_i)\bm{\Delta}_{\bm{\theta}^*}^{[k]}\bigg)
\\
& + I_{\{s>0\}}\bigg(-\frac{1}{\sqrt{\alpha_n}}\sum_{i:\tau^{[k]*}<x_i\le\tau^{[k]*}+s\alpha_n/n}\bm{\Delta}_{\bm{\theta}^*}^{[k]\T}g'(\bm{\theta}^{[k+1]*};x_i,y_i)
\\
& \ \phantom{+ I_{\{s>0\}}\bigg(} + \frac{1}{2\alpha_n}\sum_{i:\tau^{[k]*}<x_i\le\tau^{[k]*}+s\alpha_n/n}\bm{\Delta}_{\bm{\theta}^*}^{[k]\T}g''(\bm{\theta}^{[k+1]*};x_i,y_i)\bm{\Delta}_{\bm{\theta}^*}^{[k]}\bigg) + \oP(1).
\end{align*}

Now, letting $\{W_s\}_{s\in\mathbb{R}}$ be a two-sided standard Brownian motion with $\E[W_s]=0$ and $\V[W_s]=|s|$, and defining the limit of $(\bm{\Delta}_{\bm{\theta}^*}^{[k]\T}\E[-g''(\bm{\theta}^{[k]*};x,y)]\bm{\Delta}_{\bm{\theta}^*}^{[k]})^{1/2}$ and $(\bm{\Delta}_{\bm{\theta}^*}^{[k]\T}\allowbreak\E[-g''(\bm{\theta}^{[k+1]*};x,y)]\bm{\Delta}_{\bm{\theta}^*}^{[k]})^{1/2}$ as $\sigma^{[k]}$, it holds
\begin{align*}
& \frac{1}{\sqrt{\alpha_n}}\sum_{i:\tau^{[k]*}+s\alpha_n/n<x_i\le\tau^{[k]*}}\bm{\Delta}_{\bm{\theta}^*}^{[k]\T}g'(\bm{\theta}^{[k]*};x_i,y_i)\xrightarrow{\rm d}\sigma^{[k]}W_s, 
\\
& \frac{1}{\alpha_n}\sum_{i:\tau^{[k]*}+s\alpha_n/n<x_i\le\tau^{[k]*}}\bm{\Delta}_{\bm{\theta}^*}^{[k]\T}g''(\bm{\theta}^{[k]*};x_i,y_i)\bm{\Delta}_{\bm{\theta}^*}^{[k]}\xrightarrow{\rm p}-\sigma^{[k]2}|s|,
\\
& -\frac{1}{\sqrt{\alpha_n}}\sum_{i:\tau^{[k]*}<x_i\le\tau^{[k]*}+s\alpha_n/n}\bm{\Delta}_{\bm{\theta}^*}^{[k]\T}g'(\bm{\theta}^{[k+1]*};x_i,y_i)\xrightarrow{\rm d}\sigma^{[k]}W_s,
\\
& \frac{1}{\alpha_n}\sum_{i:\tau^{[k]*}<x_i\le\tau^{[k]*}+s\alpha_n/n}\bm{\Delta}_{\bm{\theta}^*}^{[k]\T}g''(\bm{\theta}^{[k+1]*};x_i,y_i)\bm{\Delta}_{\bm{\theta}^*}^{[k]}\xrightarrow{\rm p}-\sigma^{[k]2}|s|,
\end{align*}
and therefore we obtain 
\begin{align*}
Q^{[k]}(\tau^{[k]*}+s\alpha_n/n)\xrightarrow{\rm d}V^{[k]}(s)\equiv\left\{
\begin{array}{ll}
-\sigma^{[k]2}|s|/2+\sigma^{[k]}W_s & (s\le 0) \\
-\sigma^{[k]2}|s|/2+\sigma^{[k]}W_s & (s\ge 0)
\end{array}
\right..
\end{align*}
Letting $\tilde{V}^{[k]}(\cdot)$ be a copy of $V^{[k]}(\cdot)$, the same calculation as in \cite{Nin15} yields
\begin{align*}
\E\bigg[\sup_sV^{[k]}(s)\bigg] = \E\bigg[-\tilde{V}^{[k]}\bigg(\argsup_sV^{[k]}(s)\bigg)\bigg] = \frac{3}{2},
\end{align*}
which completes the proof.

\section*{Acknowledgments}

This research was supported by JSPS KAKENHI (23H00809 and 23K18471).

\bibliography{list}

\end{document}